\theoremstyle{definition}
\newtheorem{prop}{Proposition}[section]
\newtheorem{thm}{Theorem}[section]
\newtheorem{cor}{Corollary}[section]
\newtheorem{lem}{Lemma}[section]
\newtheorem{defn}{Definition}[section]
\newcommand \bfs {\mathbf{s}}
\newcommand \bfu {\mathbf{u}}
\newcommand \bfv {\mathbf{v}}
\newcommand \bfw {\mathbf{w}}
\newcommand \bfx {\mathbf{x}}
\newcommand \remove[1] {}
\begin{document}

%\title{A Collection of Near Optimal Cross Bifix Sets}
\title{Cross-Bifix-Free Codes Within a \\ Constant Factor of Optimality}
%Same-Symbol Weight Codes and Generalized Balanced Tournament Designs

\author{Yeow Meng Chee,~\IEEEmembership{Senior Member, IEEE},
Han Mao Kiah,
\\ Punarbasu Purkayastha,~\IEEEmembership{Member, IEEE},
and Chengmin Wang
\thanks{Research of Y.~M.~Chee, H.~M.~Kiah, %P.~Purkayastha,
and C.~Wang is supported in part by the National Research Foundation of Singapore under Research Grant NRF-CRP2-2007-03. C.~Wang is also supported in part by NSFC under Grant No.10801064 and 11271280.}

\thanks{Y.~M.~Chee, H.~M.~Kiah and P.~Purkayastha %and C.~Wang
are with the Division~of~Mathematical Sciences,
  School~of~Physical~and~Mathematical~Sciences,
  Nanyang~Technological~University, 21~Nanyang~Link, Singapore~637371,
  Singapore (emails:\{ymchee, kiah0001, punarbasu\}@ntu.edu.sg).}%
\thanks{C.~Wang is with School~of~Science, Jiangnan~University, Wuxi 214122, China (email:chengmin\_wang09@yahoo.com.cn).}

}

%\date{\today}

% examples of some useful macros:

\newcommand{\A}{{\cal A}}
\newcommand{\B}{{\cal B}}
\newcommand{\C}{{\cal C}}
\newcommand{\D}{{\cal D}}
\newcommand{\f}{{\cal F}}
\newcommand{\RS}{{\cal RS}}
\newcommand{\G}{{\cal G}}
\newcommand{\SSS}{{\cal S}}
\newcommand{\vc}{{\sf c}}
\newcommand{\vu}{{\sf u}}
\newcommand{\vs}{{\sf s}}

\newcommand{\vA}{{\sf A}}
\newcommand{\vB}{{\sf B}}
\newcommand{\vC}{{\sf C}}
\newcommand{\vR}{{\sf R}}
\newcommand{\tA}{\textrm A}
\newcommand{\tB}{\textrm B}

\newcommand{\CC}{\mathbb C} % blackboard math , for ``complex,'' etc
\newcommand{\RR}{\mathbb R}
\newcommand{\ZZ}{\mathbb Z}
\newcommand{\ff}{\mathbb Z}
\newcommand{\FF}{\mathbb F}
\newcommand{\nonneg}{\ZZ_{\ge 0}}
\newcommand{\ceiling}[1]{\left\lceil{#1}\right\rceil}
\newcommand{\floor}[1]{\left\lfloor{#1}\right\rfloor}
\newcommand{\mmod}{\textrm{ mod }}

\newcommand{\wt}[1]{\textrm{wt}{(#1)}}
\newcommand{\trace}[1]{\textrm{Trace}{(#1)}}
\newcommand{\supp}[1]{\textsf{supp}{(#1)}}
\newcommand{\lev}[1]{\textsf{lev}{(#1)}}
\newcommand{\dist}{\textsf{dist}}
\newcommand{\packing}[1]{\textrm{Packing}_\textrm{#1}}
\newcommand{\ppty}[1]{\textsf{Property #1}}

\newcommand{\block}{\mathcal B}
\newcommand{\group}{\mathcal G}

\newcommand{\gbtd}{\textrm{GBTD}}
\newcommand{\hgbtd}{\textrm{hGBTD}}

\newcommand{\bibd}{\textrm{BIBD}}
\newcommand{\rbibd}{\textrm{RBIBD}}

\newcommand{\kts}{\textrm{KTS}}
\newcommand{\fkts}{\textrm{FKTS}}

\newcommand{\td}{\textrm{TD}}
\newcommand{\drtd}{\textrm{DRTD}}

\newcommand{\gdd}{\textrm{GDD}}
\newcommand{\fgdrp}{\textrm{FGDRP}}
\newcommand{\pbd}{\textrm{PBD}}
\newcommand{\spec}{\textrm{Spec}}

\newcommand{\inprod}[1]{\langle{#1}\rangle}

\newcommand{\edit}[1]{{\color{red}#1}}
\newcommand{\edittwo}[1]{{\color{blue}#1}}
\newcommand{\citereq}{{\color{blue}[citation required]}}
\newcommand \etal{\emph{~et~al.~}}
\newcommand{\vvspace}{\vspace{1mm}}

%\newcommand{\qed}{$\Box$}      % box  indicating end of proof.

% for a sequence of unnumbered displayed equations:
\newcommand{\beas}{\begin{eqnarray*}}
\newcommand{\eeas}{\end{eqnarray*}}

\newcommand{\bm}[1]{{\mbox{\boldmath $#1$}}} % for boldface math symbols

% for two rows, say under a summation sign (AmS-LaTeX)
\newcommand{\tworow}[2]{\genfrac{}{}{0pt}{}{#1}{#2}}
\newcommand{\qbinom}[2]{\left[ {#1}\atop{#2}\right]_q}

\newcommand{\Lovasz}{Lov\'{a}sz }
\newcommand{\pp}{^\prime}
\newcommand{\aaa}{^\ast}
\newcommand{\union}{\bigcup\limits}

\newcommand{\cS}{\mathcal{S}}   % caligraphic S

\makeatletter
\def\imod#1{\allowbreak\mkern10mu({\operator@font mod}\,\,#1)}
\makeatother

%%%%%%%%%%%%%%%%%%%%%%%%%%

%\setcounter{page}{0}
\maketitle

\begin{abstract}
A cross-bifix-free code is a set of words in which no prefix of any length
of any word is the suffix of any word in the set. Cross-bifix-free codes
arise in the study of distributed sequences for frame synchronization. We
provide a new construction of cross-bifix-free codes which generalizes the
construction in Bajic (2007) to longer code lengths and to any alphabet
size. The codes are shown to be nearly optimal in size. We also establish new
results on Fibonacci sequences, that are used in estimating the size of the
cross-bifix-free codes.
\end{abstract}

\begin{IEEEkeywords}
Cross-bifix-free code, Fibonacci sequence, Synchronization sequence.
\end{IEEEkeywords}

%%%%%%%%%%%%%%%%%%%%%%%%%%%%%%%%%%%%%%%%%%%%%%%%%%%%%%%%%%%%%%%%

\section{Introduction}

A crucial requirement to reliably transmit information
in a digital communication system is to establish synchronization between
the transmitter and the receiver. Synchronization is required not only to
determine the start of a symbol, but also to determine the start of a frame
of data in the received signals. The initial acquisition of frame
synchronization and the maintenance of this synchronization has been
a widely studied field of research for several decades. Early works on
frame synchronization
concentrated on introducing a synchronization word periodically into the
data stream \cite{mas72,nie73}. In the receiver, correlation techniques were used to
determine the position of the synchronization sequence within the data
stream. Massey \cite{mas72} introduced the notion of \emph{bifix-free}
synchronization word in order to achieve fast and reliable synchronization
in binary data streams. A bifix-free word denotes a sequence of symbols in
which no prefix of any length of the  word is identical to any suffix of
the word.

The current methods for achieving frame synchronization at the receiver do
not look at exact matching of the synchronization word. Instead, the
objective is to search for a word that is within a specified Hamming
distance of the transmitted synchronization word. This procedure allows for
faster synchronization between the transmitter and the receiver \cite{baj04}.
Van Wijngaarden and Willink\cite{wij00} introduced the notion of a \emph{distributed sequence} where the
synchronization word is not a contiguous sequence of symbols but is instead
interleaved into the data stream.
For instance the binary sequence $110*0*0$ is a distributed sequence where
the symbol $*$ denotes a data symbol that can take either of the values
0 or 1.
Van Wijngaarden and Willink\cite{wij00} provided constructions of such
sequences for binary data streams and studied their properties.
Bajic\etal\cite{baj03,baj04} showed that the distributed sequence entails
a simultaneous search for a set of synchronization words. Each word in the
set of sequences is required to be bifix-free. In addition there arises a new
requirement that no prefix of any length of any word in the set should be
a suffix of any other word in the set. This property of the set of
synchronization words was termed as \emph{cross-bifix-free} in
\cite{baj03,baj04,ste12}. In the same works, the properties of sets of words that are
cross-bifix-free were statistically analyzed. In this article we term
the set of words which are cross-bifix-free as a \emph{cross-bifix-free code.}
In the above example of a distributed sequence the set of words
$\{(1,1,0,0,0,0,0),(1,1,0,0,0,1,0),(1,1,0,1,0,0,0),$ $(1,1,0,1,0,1,0)\}$ forms
a cross-bifix-free code.

In a follow up work, Bajic
\cite{baj07} provided a new construction of cross-bifix-free codes over
a binary alphabet for word lengths up to eight. This specific construction
uncovers interesting connections to the Fibonacci sequence of numbers. In
particular, the number $S(n)$ of binary words of length $n$, for $3\le n\le8,$ which are
cross-bifix-free satisfies the Fibonacci recursion
$$
S(n) = S(n-1) + S(n-2).
$$
It is noted in \cite{baj07} that although this construction gives larger sets
compared to distributed sequences \cite{wij00} for $n\le8$, the sizes of the sets are
relatively smaller for lengths greater than eight. In a recent work
Bilotta\etal\cite{bil12}
introduced a new construction of binary cross-bifix-free codes based on
lattice paths, and showed that their construction attains greater cardinality compared
to the ones in \cite{baj07}.

In this work, we revisit the construction in Bajic\cite{baj07}. We give a new
construction of cross-bifix-free codes that  generalizes
the construction of \cite{baj07} in two ways. First, we provide new binary
codes that are greater in cardinality compared to the ones in \cite{bil12} for
larger lengths. In
the process we discover interesting connections of the size of the codes
obtained to
the so-called $k$-generalized Fibonacci numbers. Secondly, we generalize
the construction to $q$-ary alphabets for any $q, q\ge2.$ To the best of
our knowledge, this is the first construction of cross-bifix-free codes
over alphabets of size greater than two. The size of the
generalized $q$-ary constructions are also related to a Fibonacci sequence,
that we call the $(q-1)$-weighted $k$-generalized Fibonacci sequence (see
Section \ref{sec:notation} for the exact definition). Using this relation to the
Fibonacci sequences we analyze the asymptotic size of our construction.
In the process of this asymptotic analysis, we generalize a result of
Dresden \cite{dre11} on $k$-generalized Fibonacci sequence to $(q-1)$-weighted
$k$-generalized Fibonacci sequence. The main asymptotic result on the size
of cross-bifix-free codes that we prove is described in the theorem below.
\begin{thm}
\label{thm:main-thm}
Let $C(n,q)$ denote the maximum size of a cross-bifix-free code of
length $n$ over an alphabet of size $q$. Then,
\begin{align}
    \liminf_{n\to\infty} \frac{C(n,q)}{q^n/n} &\ge \frac{q-1}{qe}
                                              \simeq 0.368 \frac{q-1}q.\label{eq:lower}\\
        \limsup_{n\to\infty} \frac{C(n,q)}{q^n/n} &\le
                             \frac12 = 0.5 \label{eq:upper}.
\end{align}
\end{thm}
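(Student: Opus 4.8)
The plan is to prove the two inequalities by essentially independent arguments: the lower bound \eqref{eq:lower} from the explicit $q$-ary construction together with the asymptotics of the weighted Fibonacci numbers, and the upper bound \eqref{eq:upper} from a counting argument on overlapping windows.

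\textbf{Lower bound.} For each admissible order $k$ the construction furnishes an explicit cross-bifix-free code of length $n$ whose cardinality, as established in the preceding sections, is the $(q-1)$-weighted $k$-generalized Fibonacci number $f^{(k)}_n$, so $C(n,q)\ge f^{(k)}_n$. First I would insert the asymptotics of $f^{(k)}_n$. By the generalization of Dresden's formula proved earlier, together with the shape of the construction (a short fixed prefix of length $\approx k$ followed by a restricted tail), $f^{(k)}_n=\bigl(c_k+o_n(1)\bigr)\,q^{-k}\,\alpha_k^{\,n}$, where $\alpha_k$ is the dominant root of the associated characteristic polynomial, $\alpha_k\uparrow q$, and $c_k$ tends to an explicit constant as $k\to\infty$. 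Next I would make the approach of $\alpha_k$ to $q$ quantitative: a perturbative solution of the characteristic equation near $q$ gives $q-\alpha_k=(q-1)q^{1-k}\bigl(1+o_k(1)\bigr)$, hence $(\alpha_k/q)^n=\exp\!\bigl(-(q-1)q^{-k}n\,(1+o(1))\bigr)$. Combining these, and writing $t=q^{-k}n$,
\[
\frac{C(n,q)}{q^n/n}\;\ge\;\frac{n\,f^{(k)}_n}{q^n}\;=\;\bigl(c_k+o(1)\bigr)\;t\,e^{-(q-1)t}\;(1+o(1)).
\]
Finally I would optimize over the order: the map $t\mapsto t\,e^{-(q-1)t}$ is maximized at $t^\star=1/(q-1)$ with value $1/\bigl(e(q-1)\bigr)$, and choosing $k=k(n)\sim\log_q n$ so that $q^{-k}n$ tends to $t^\star$, while $c_k$ converges to its limit $\tfrac{(q-1)^2}{q}$ (the normalizing weight carried by the codewords of this construction), drives the right-hand side to $\tfrac{(q-1)^2}{q}\cdot\tfrac{1}{e(q-1)}=\tfrac{q-1}{qe}$.

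The step I expect to be delicate here is the optimization: because $k$ is an integer, the quantity $q^{-k}n$ is confined to a geometric grid of ratio $q$ and need not approach $t^\star$ along a given sequence of $n$, so reaching the stated constant requires exploiting additional freedom in the construction (choosing, besides $k$, the precise codeword length, possibly followed by a length-preserving modification of the codewords) in order to bring the effective value of $t$ arbitrarily close to $t^\star$ along a suitable subsequence — all while checking that the Dresden-type asymptotic for $f^{(k)}_n$ carries error terms uniform enough to survive the joint limit $k=k(n)\to\infty$, $n\to\infty$.

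\textbf{Upper bound.} Let $S$ be a cross-bifix-free code of length $n$. The structural fact driving the argument is that no word $x$ of length $2n-1$ can contain two occurrences of codewords of $S$: if codewords occurred at two distinct positions inside $x$, then, since the two length-$n$ windows are at most $n-1$ apart, they would overlap in between $1$ and $n-1$ consecutive coordinates, and that common block would be simultaneously a proper suffix of one codeword and a proper prefix of another (of the same codeword, if the two occurrences coincide), contradicting the cross-bifix-free property. Hence every word of length $2n-1$ contains at most one codeword, so counting pairs consisting of a word of length $2n-1$ together with an occurrence of a codeword in it gives $|S|\cdot n\cdot q^{\,n-1}\le q^{\,2n-1}$, that is $C(n,q)\le q^{\,n}/n$. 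To sharpen the constant to $\tfrac12$ one must show that the words of length $2n-1$ that do contain a codeword occupy only a $\tfrac12+o(1)$ fraction of all $q^{\,2n-1}$ such words; granting this, the same count yields $|S|\cdot n\cdot q^{\,n-1}\le\bigl(\tfrac12+o(1)\bigr)q^{\,2n-1}$, and \eqref{eq:upper} follows on letting $n\to\infty$. I would attack this fraction bound by combining the correlation constraints forced by bifix-freeness at the site of any codeword occurrence (already $x_{i+1}\ne x_{i+n}$ there) with a transfer-matrix estimate controlling how many words of a given length avoid every codeword of $S$; this passage from the crude $q^{\,n}/n$ to $\tfrac12\cdot q^{\,n}/n$ is the part of the argument I expect to require the most care.
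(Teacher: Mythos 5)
Your lower-bound argument is essentially the paper's: the code $\cS_{k,q}(n)$ has size $(q-1)^2F_{k,q}(n-k-2)$, the Binet-type formula (Proposition~\ref{prop:fib}) together with the estimate of the dominant root (Lemma~\ref{lem:beta}) gives the asymptotics, and one optimizes the ratio of $n$ to the order $k$. Two of your intermediate constants are, however, off by mutually compensating factors of $q$: from $g(\alpha)=0$ one gets $\alpha^k(q-\alpha)=q-1$, so $q-\alpha_k\sim(q-1)q^{-k}$, not $(q-1)q^{1-k}$, whence $(\alpha_k/q)^n=\exp\bigl(-\tfrac{q-1}{q}\,t\,(1+o(1))\bigr)$ with $t=nq^{-k}$ and optimum $t^\star=q/(q-1)$; correspondingly the normalizing constant tends to $(q-1)^2/q^2$, not $(q-1)^2/q$. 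The product still comes out to $(q-1)/(qe)$, but the individual claims as written are wrong. Your worry about the integer grid $\{nq^{-k}\}$ not hitting $t^\star$ for arbitrary $n$ is well taken; the paper does not resolve it either --- it simply evaluates along the subsequence $n=\lceil c\alpha^k\rceil$ with $c=q/(q-1)$ --- so this is a concern you share with the paper rather than a defect relative to it.

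The upper bound is where your proposal genuinely diverges from the paper, and where it has a real gap. The paper obtains $C(n,q)\le q^n/(2n-1)$ in one line from the nonnegativity of the variance of the waiting time until the first codeword occurrence in a random stream, $\sigma^2=(1-2n)q^n/M+q^{2n}/M^2\ge0$ (Bajic's formula \eqref{eq:variance}). Your window-counting argument is correct as far as it goes, but it only yields $C(n,q)\le q^n/n$, hence $\limsup\le1$, not $\tfrac12$. The proposed sharpening is circular: precisely because each word of length $2n-1$ contains at most one codeword occurrence, the number of such words containing an occurrence is exactly $n|S|q^{n-1}$, so the assertion that the words containing a codeword occupy at most a $(\tfrac12+o(1))$ fraction of all $q^{2n-1}$ words is word-for-word equivalent to the bound $|S|\le(\tfrac12+o(1))q^n/n$ you are trying to prove. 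The correlation and transfer-matrix remarks do not supply an independent route to that fraction. Recovering the factor $2$ requires a second-moment argument on the occurrence (renewal) process, as in the paper, not a first-moment count.
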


Note that the lower bound is within a constant factor of the best possible construction.
The ratio between the lower and the upper bound increases
towards $2/e=0.736$ for larger alphabet sizes.
In comparison, a similar ratio of the size of the
binary codes constructed by Bilotta\etal\cite{bil12} or
the distributed sequences by van Wijngaarden and Willink\cite{wij00},
to the quantity $2^n/n$, asymptotically goes to zero.

The rest of the paper is organized as follows. Our presentation
is provided for general alphabet size $q,\ q\ge 2$, and the results for the
binary alphabet are obtained as a special case. In Section
\ref{sec:construction}, we
provide the construction of the cross-bifix-free code and show that for the
binary alphabet it is
optimal for lengths $n\le14,$ barring an exception at $n=9.$ In Section
\ref{sec:near-optimality} we study the asymptotic behavior of the size of
cross-bifix-free codes obtained from our construction.
{In particular, we exhibit \eqref{eq:lower} and \eqref{eq:upper} in
Theorem \ref{prop:limit2} and Theorem \ref{thm:upper-bound}, respectively.}
Results on the
behavior of the $(q-1)$-weighted $k$-generalized Fibonacci sequence are
also presented in this section. Lengthy calculations and some proofs are
deferred to the Appendix. In the  section  below we
introduce the basic notations and definitions required.

\section{Notations and definitions}
\label{sec:notation}
Let $\ZZ_q = \{0,\dots,q-1\}$ be an alphabet of $q$ elements. We denote by
$\ZZ_q^*$ all the nonzero elements of the set $\ZZ_q$, that is, $\ZZ_q^*
= \ZZ_q\setminus\{0\}.$
A consecutive sequence of $m$ elements $b\in\ZZ_q$ is denoted by the short
form $b^m.$ As an example, the word $(0,0,1,1,1,0,1)$ is represented in
short as $(0^2,1^3,0,1)$. For convenience, if $m=0$ then $b^m$ is used to
denote the absence of any element.

\begin{defn}
For a word $\bfu\in\ZZ_q^n$, a word $\bfv$ is called a \emph{prefix} of
$\bfu$ if we can write $\bfu$ as $\bfu = (\bfv,\bfw)$, for some word $\bfw$.
The word $\bfw$ is called a \emph{suffix} of $\bfu\in\ff_q^n$ if we can
write $\bfu$ as $\bfu = (\bfv,\bfw)$, for some word $\bfv$.
\end{defn}
For any word $\bfu$ we only consider prefixes and suffixes which have
length strictly less than the length of $\bfu.$

\begin{defn}
A word $\bfu\in\ff_q^n$ is called \emph{bifix-free} if the prefix of any
length of the word is not a suffix of the word.
\end{defn}

\begin{defn}
A \emph{cross-bifix-free code} is a set of words in $\ZZ_q^n$ which satisfy the
property that the prefix of any length of any word is not the suffix of
any word in the set, including itself.
\end{defn}
We denote the maximum size of
a cross-bifix-free code by the notation $C(n,q)$.

\begin{defn}
The \emph{$(q-1)$-weighted $k$-generalized Fibonacci sequence} is a sequence of numbers which
satisfies the recurrence relation
$$
    F_{k,q}(n) = (q-1) \sum_{l=1}^k F_{k,q}(n-l),
$$
for some initial values of $F_{k,q}(0),\dots,F_{k,q}(k-1)$.
For $q=2$, the sequence obtained is called a \emph{$k$-generalized
Fibonacci sequence.} For $q=2$, $k=2$, and the initialization
$F_{2,2}(0) = 1, F_{2,2}(1) = 2,$ we obtain the usual Fibonacci
sequence.
\end{defn}
The $(q-1)$-weighted $k$-generalized Fibonacci sequence is a special case
of the \emph{weighted $k$-generalized Fibonacci sequence} which satisfies
the recurrence relation \cite{sch08,lev85}
$$
F_{k}(n) = a_1 F_{k}(n-1) + a_2 F_{k} (n-2) + \cdots + a_{k}
F_{k} (n-k),
$$
where the weights are given by $a_1,a_2,\dots,a_{k}\in \mathbb Z$, and
$\mathbb Z$ denotes the integers. Setting all the
weights equal to $q-1$ gives the sequence in the above definition.

The $(q-1)$-weighted $k$-generalized Fibonacci sequence arises in the
study of cross-bifix-free codes as described in the section below.

\section{A Construction of Cross-Bifix-Free Codes}
\label{sec:construction}
In this section we provide a general construction of cross-bifix-free codes
over the $q$-ary alphabet. Interestingly, the sizes of our construction are
related to the $(q-1)$-weighted $k$-generalized weighted Fibonacci numbers
$F_{k,q}(n)$.
The initialization on $F_{k,q}(n),\ 0\le n\le k-1$ that we use is given as
\begin{equation}
\label{eq:fibonacci-initial}
F_{k,q}(n)= q^n,\ n = 0,\dots,k-1.
\end{equation}

Below, we describe the family of cross-bifix-free codes in the space $\ZZ_q^n$. The
family is obtained by varying the value of $k$.\\

\noindent{\bf The construction:}
For any $2\le  k\le  n - 2$, denote by
$\cS_{k,q}(n)$ the set of all words $(s_1, s_2,\dots,s_n)$ in
$\ZZ_q^n$ that  satisfy the following two properties:
\begin{enumerate}
\item $s_1 = s_2 = \cdots = s_k = 0$, $s_{k+1}\ne 0$ and $s_{n}\ne 0$,
\item the subsequence $(s_{k+2},s_{k+3},\dots ,s_{n-1})$ does not contain
any string of $k$ consecutive $0$'s.
\end{enumerate}

This construction implies that $\cS_{k,q}(n)$ contains all possible words
of length $n$ that start with $k$ zeroes, end with a nonzero element, and
have at most $k-1$ consecutive zeroes in the last $n-1$ coordinates.
In the remaining part of this
section we show that for every $k,\ k=2,\dots,n-2$, this set of words
forms a cross-bifix-free code. We determine its size in terms of the
Fibonacci sequence.
First, in the theorem below, we show that $\cS_{k,q}(n)$ is
a cross-bifix-free code. Additionally, we show that
the code $\cS_{k,q}(n)$ has the property that it can not be expanded
while preserving the property that it is cross-bifix-free. That is, for
every word $\bfx\in\ff_q^n\setminus\cS_{k,q}(n)$, the set
$\{\bfx\}\cup\cS_{k,q}(n)$ is not cross-bifix-free.
\begin{thm}
For any $k,\ 2\le k\le n-2$, the set $\cS_{k,q}(n)$ is a nonexpandable
cross-bifix-free code.
\end{thm}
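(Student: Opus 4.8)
The plan is to verify the two assertions separately: first that $\cS_{k,q}(n)$ is cross-bifix-free, and then that it is nonexpandable. For the cross-bifix-free property, take any two words $\bfu = (u_1,\dots,u_n)$ and $\bfv = (v_1,\dots,v_n)$ in $\cS_{k,q}(n)$ (possibly equal), and suppose for contradiction that some proper prefix of $\bfu$ of length $j$, $1 \le j \le n-1$, equals the suffix of $\bfv$ of length $j$, i.e. $(u_1,\dots,u_j) = (v_{n-j+1},\dots,v_n)$. I would split into cases according to the size of $j$. Since $u_1 = \cdots = u_k = 0$, the matched suffix of $\bfv$ must begin with $\min(j,k)$ zeroes. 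If $j \le k$, then $u_j = 0$ forces $v_n = 0$, contradicting property (i) for $\bfv$. If $j > k$, then the suffix of $\bfv$ of length $j$ contains, in positions $n-j+1,\dots,n-j+k$, a run of $k$ consecutive zeroes (coming from $u_1,\dots,u_k$); since $j \le n-1$ we have $n-j+1 \ge 2$, so this run of $k$ zeroes lies entirely inside the coordinates $(v_2,\dots,v_{n})$ of $\bfv$. The only place in $\bfv$ where $k$ consecutive zeroes may occur among coordinates $2,\dots,n$ is the block $v_2 = \cdots = v_k = 0$ together with possibly $v_1$, i.e. the run must start at coordinate $1$ — but it starts at coordinate $n-j+1 \ge 2$, so the run of $k$ zeroes would have to sit inside $(v_2,\dots,v_{n-1})$ after discarding at most the boundary, contradicting property (ii) (together with $v_{k+1}\ne 0$ and $v_n\ne 0$). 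The one subtlety is to handle the boundary indices carefully: I must confirm that for $k < j \le n-1$ the forced zero-run cannot be absorbed by the legitimate leading block of $\bfv$ or by the single nonzero-free positions at the ends, and that is exactly where the constraints $s_{k+1}\ne 0$ and $s_n\ne 0$ in (i) are used.

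For nonexpandability, I would take an arbitrary $\bfx = (x_1,\dots,x_n) \in \ZZ_q^n \setminus \cS_{k,q}(n)$ and show that $\{\bfx\} \cup \cS_{k,q}(n)$ fails to be cross-bifix-free, i.e. either some prefix of $\bfx$ is a suffix of some word in $\cS_{k,q}(n)$ (or of $\bfx$ itself), or some prefix of a word in $\cS_{k,q}(n)$ is a suffix of $\bfx$. Since $\bfx \notin \cS_{k,q}(n)$, at least one of the defining conditions fails, and I would go through the ways this can happen: (a) $\bfx$ does not start with $k$ zeroes, or starts with more, i.e. the leading run of zeroes in $\bfx$ has length $\ne k$; (b) $x_n = 0$; (c) the leading run has length exactly $k$, $x_{k+1}\ne 0$, $x_n \ne 0$, but $(x_{k+2},\dots,x_{n-1})$ contains a run of $k$ consecutive zeroes. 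In each case I would exhibit an explicit witness. For instance, if the leading run of $\bfx$ has length $m < k$, then $\bfx$ shares no nontrivial prefix/suffix obstruction of the first kind, but I can pad: consider that the length-$n$ prefix issue forces looking at the prefix $(x_1,\dots,x_j)$ for a suitable $j$ and matching it against the suffix of a conveniently chosen codeword whose tail is $(0^{k-m}, x_1,\dots,x_{\text{something}})$ — more cleanly, the natural witness is to note that $(0^k, s_{k+1},\dots)$-type words have suffixes ending in any prescribed nonzero symbol and containing controlled zero-runs, so one builds a word in $\cS_{k,q}(n)$ whose length-$(n-?)$ suffix matches a prefix of $\bfx$, or conversely whose prefix $0^k\cdots$ matches a suffix of $\bfx$ when $\bfx$ contains $k$ consecutive zeroes somewhere in positions $n-k+1,\dots,n$ (which happens in cases (b) and (c), and in case (a) when $m > k$).

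Concretely, the cleanest uniform argument for nonexpandability is: if $\bfx$ contains a run of at least $k$ consecutive zeroes among its last $k$ coordinates, or ends in $0$, then the prefix $(0,\dots,0) = 0^k$ of any word of $\cS_{k,q}(n)$ is a suffix-segment issue — more precisely one checks that either $x_n=0$ (then the length-$1$ prefix $0$ of... no, codewords don't start with... wait, they start with $0$), so the length-$j$ prefix $0^j$ ($j\le k$) of every codeword equals a length-$j$ suffix of $\bfx$ whenever $\bfx$'s last $j$ symbols are all $0$; and if instead $\bfx$ begins with $m\ne k$ zeroes then $x_{m+1}\ne 0$, and one shows the length-$(m+1)$ prefix $(0^m, x_{m+1})$ of $\bfx$ coincides with a length-$(m+1)$ suffix of a suitably chosen codeword of $\cS_{k,q}(n)$ (using that codewords' suffixes realize every pattern ending in a nonzero symbol with short zero-runs, which is possible since $m+1 \le n-1$ and we only need short zero-runs of length $< k$ when $m < k$, while $m > k$ is handled by the zero-run witness). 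I expect the main obstacle to be the bookkeeping in the nonexpandability part: there are several failure modes for membership in $\cS_{k,q}(n)$, and for each I must produce a witness that genuinely has length strictly less than $n$ and lies in the correct direction (prefix-of-$\bfx$ vs. suffix-of-$\bfx$), while confirming the matching codeword actually satisfies properties (i)–(ii). The cross-bifix-free part itself is a clean case analysis and should go through with the index arithmetic above.
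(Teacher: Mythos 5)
Your first half (the cross-bifix-free property) is correct and is essentially the paper's own argument: a prefix of length $j\le k$ is all-zero while every suffix of a codeword ends in the nonzero symbol $s_n$, and a prefix of length $j>k$ begins with $k$ consecutive zeroes while the last $n-1$ coordinates of any codeword contain at most $k-1$ consecutive zeroes (positions $2,\dots,k$ give only $k-1$ zeroes, then $s_{k+1}\ne 0$, then property (ii) on $(s_{k+2},\dots,s_{n-1})$, then $s_n\ne 0$). Your index bookkeeping for where the forced zero-run would have to sit closes correctly.

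The nonexpandability half, however, is a plan rather than a proof: the entire content of the paper's argument there is a pair of explicit witnesses, and you never produce them. The paper argues: (a) a candidate starting with a nonzero symbol or ending in $0$ is excluded by length-one matches; (b) if the candidate has at least $k$ consecutive zeroes among its last $n-1$ coordinates, take its suffix $(0^k,\alpha,\bfu)$ beginning at the \emph{last} such run (so $\bfu$, of some length $m$, has zero-runs of length less than $k$) and check that $(0^k,\alpha,\bfu,1^{n-m-k-1})\in\cS_{k,q}(n)$ has this suffix as a prefix; (c) if the candidate is $(0^l,\alpha,\dots)$ with $0<l\le k-1$, then $(0^l,\alpha)$ is a suffix of $(0^k,1^{n-k-l-1},0^l,\alpha)\in\cS_{k,q}(n)$. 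Your sketch reaches for (c) via the blanket claim that codewords' suffixes realize every pattern ending in a nonzero symbol with short zero-runs, but that claim is false once $l\ge n-k-1$: a codeword suffix of the form $(0^l,\alpha)$ forces coordinates $n-l,\dots,n-1$ to vanish, which collides with the requirement $s_{k+1}\ne 0$ as soon as $n-l\le k+1$. (The paper's witness in (c) carries the same restriction $n-k-l-1\ge 1$; for instance with $n=5$, $k=3$, $q=2$ one checks directly that $\cS_{3,2}(5)=\{(0,0,0,1,1)\}$ can be extended by $(0,0,1,0,1)$, so the regime $l\ge n-k-1$ is a genuine danger zone and not mere bookkeeping.) The step you defer as "the main obstacle\dots bookkeeping" is therefore exactly the place where the argument, as you have set it up, does not go through.
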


\begin{proof}
To see that $\cS_{k,q}(n)$ is a cross-bifix-free code, note that the prefix of
any word of $\cS_{k,q}(n)$ starts with $k$ consecutive zeroes. But in the last $n-1$
coordinates of any word, we have at most $k-1$ consecutive zeroes, and
the last coordinate is always nonzero. Thus,
no prefix of any length of any word can match any suffix of itself or of
any other word in $\cS_{k,q}(n)$.

To show that $\cS_{k,q}(n)$ is nonexpandable we consider all the possible
configurations of words that could be appended to the set $\cS_{k,q}(n)$.
First we note that we can not append any word starting with a nonzero
element since the nonzero element occurs in the last coordinate of some
word in $\cS_{k,q}(n)$. Similarly, we can not append any word ending with
a zero element. The other possible configurations of words that
we need to consider are as follows.
\begin{itemize}
    \item Let $\bfs$ be a word which contains at least $k$
        consecutive zeroes in the last $n-1$ coordinates. We consider the
        suffix in $\bfs$ that starts with the last set of $k$ consecutive zeroes and
        contains at most $k-1$ consecutive zeroes following it, that is,
        the suffix has the form $(0^k,\alpha,\bfu)$, where
        $\alpha$ is nonzero and $\bfu$ is a vector of length $m$ that has
        at most $k-1$ consecutive zeroes. Then the word of length $n$
        $(0^k,\alpha,\bfu,1^{n-m-k-1})$ is a word in
        $\cS_{k,q}(n)$ and has a prefix matching a suffix of $\bfs$. Thus
        $\bfs$ can not be appended to $\cS_{k,q}(n)$.

    \item Let $\bfs$ be a word which contains a prefix of at most $k-1$
        zeroes followed by a nonzero element, that is $\bfs
        = (0^l,\alpha,\bfu)$, where $0<l\le k-1$,
        $\alpha$ is nonzero, and $\bfu$ has length $n-l-1$. It is readily seen
        that $(0^l,\alpha)$ is also the suffix of the word
        $(0^k,1^{n-k-l-1},0^l,\alpha)$ in
        $\cS_{k,q}(n)$.
        Hence, such a word can not be appended to
        $\cS_{k,q}(n)$.
\end{itemize}
Thus, no additional word can be appended to the set $\cS_{k,q}(n)$ while
still preserving the cross-bifix-free property.
\end{proof}

The nonexpandability of the construction above parallels the
nonexpandability of the cross-bifix-free codes obtained in
Bajic \cite{baj07} and Bilotta\etal\cite{bil12}.  However, note that the nonexpandability does
not automatically indicate the optimality of the
construction, as is evident from the many values of $k$ for which the
nonexpandability holds true. In the following sections, we instead show
that the largest sized set
obtained by optimizing over the value of $k,\ k=2,\dots,n-2$, differs (in
ratio) from
the size of the optimal code by only a factor of a constant $2(q-1)/(qe)$.

We first describe a recursive construction of the set $\cS_{k,q}(n)$ in
terms of the sets $\cS_{k,q}(n-l),\ l=1,\dots,k.$ This recursive
construction immediately establishes the connection to the Fibonacci
recurrence and helps us determine the size of the set in terms of the
Fibonacci numbers.
\begin{thm}
\label{thm:recursion}
\begin{equation*}
\cS_{k,q}(n) =
\begin{cases}
    \big\{(0^k,\alpha, \bfs, \beta): \alpha, \beta \in \ZZ_q^*,
    \bfs\in\ff_q^{n-k-2}\big\},\\
    \qquad\qquad\qquad\qquad\qquad\quad k+2\le n\le 2k+1,\\
    \bigcup_{l=1}^k \big\{(\bfs,0^{l-1},\alpha): \bfs\in\cS_{k,q}(n-l),
    \alpha\in\ZZ_q^*\big\},\\
        \qquad\qquad\qquad\qquad\qquad\qquad\qquad\quad     2k+2\le n.
\end{cases}
\end{equation*}
\end{thm}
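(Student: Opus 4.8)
The plan is to prove both cases by a direct double-inclusion argument, checking that each side of the claimed identity satisfies the two defining properties of $\cS_{k,q}(n)$.

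Consider first the base range $k+2 \le n \le 2k+1$. Here the middle block $(s_{k+2},\dots,s_{n-1})$ has length $n-k-2 \le k-1$, so it can never contain $k$ consecutive zeroes; property (ii) is vacuous. Thus $\cS_{k,q}(n)$ consists of exactly those words with $s_1=\cdots=s_k=0$, $s_{k+1}\ne 0$, $s_n\ne 0$, and the $n-k-2$ intermediate coordinates completely free. Writing $\alpha = s_{k+1}$, $\beta = s_n$, and $\bfs = (s_{k+2},\dots,s_{n-1}) \in \ff_q^{n-k-2}$ gives the stated description immediately.

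For the range $n \ge 2k+2$, I would argue both inclusions. For "$\supseteq$": take $\bfs \in \cS_{k,q}(n-l)$ and $\alpha \in \ZZ_q^*$, and consider the word $\bfx = (\bfs, 0^{l-1}, \alpha)$ of length $n-l + (l-1) + 1 = n$. Since $\bfs$ starts with $0^k$ and $k \le n-l-2$ (because $l \le k \le n-2k$ forces $n-l \ge k+2$), the prefix of $\bfx$ is $0^k$ followed by a nonzero coordinate, so property (i) holds; the last coordinate is $\alpha \ne 0$. For property (ii), the intermediate block of $\bfx$ is obtained from the intermediate block of $\bfs$ by appending the last coordinate of $\bfs$ (which is nonzero) followed by $0^{l-1}$; since $l-1 \le k-1$ and $\bfs$'s intermediate block has no $k$ consecutive zeroes, no run of $k$ zeroes is created. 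Hence $\bfx \in \cS_{k,q}(n)$. For "$\subseteq$": given $\bfx = (s_1,\dots,s_n) \in \cS_{k,q}(n)$, look at the trailing run of zeroes: since $s_n \ne 0$, there is a unique $l$ with $1 \le l \le k$ such that $s_{n-l+1} = \cdots = s_{n-1} = 0$ (i.e. $0^{l-1}$, possibly empty) and $s_{n-l} \ne 0$ — here $l \le k$ is forced precisely by property (ii) applied to coordinates $k+2,\dots,n-1$. Set $\bfs = (s_1,\dots,s_{n-l})$ and $\alpha = s_n$; then $\bfx = (\bfs, 0^{l-1}, \alpha)$, and one checks that $\bfs \in \cS_{k,q}(n-l)$: it starts with $0^k$ (as $n-l \ge k+2$), ends with $s_{n-l} \ne 0$, and its intermediate block is a sub-block of that of $\bfx$, hence free of $k$ consecutive zeroes. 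Finally, these $k$ cases are disjoint since $l$ is determined by $\bfx$, which matches the disjoint-union form of the statement.

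The only genuinely delicate point is the bookkeeping at the boundary: one must verify that in the range $n \ge 2k+2$ the recursive pieces $\cS_{k,q}(n-l)$ are themselves well-defined cross-bifix-free codes, i.e. that $n-l \ge k+2$ for every $l \in \{1,\dots,k\}$; this is exactly the inequality $n \ge 2k+2$ in the hypothesis, so it is automatic. I would also take care that the "trailing run of zeroes" extraction correctly handles the extreme sub-cases $l=1$ (no trailing zeroes before $s_n$) and $l=k$ (a maximal allowed run), and that in the latter case the nonzero coordinate $s_{n-k}$ falls at position $n-k \ge k+2$, so it is genuinely an interior coordinate of $\bfs$ rather than colliding with the leading block $0^k$ — again guaranteed by $n \ge 2k+2$. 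Everything else is a routine check of the two defining properties.
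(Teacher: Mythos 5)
Your proof is correct and follows essentially the same route as the paper's: the base range is handled by observing that property (ii) is vacuous there, and the range $n\ge 2k+2$ by the same trailing-zero-run decomposition into the sets $\{(\bfs,0^{l-1},\alpha)\}$ together with the same disjointness observation. The only blemish is the parenthetical justification ``$l\le k\le n-2k$'', which is not a valid inequality chain (take $n=2k+2$ with $k\ge 3$); the inequality you actually need, $n-l\ge n-k\ge k+2$, follows directly from $n\ge 2k+2$ and $l\le k$.
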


\begin{proof}
For $n=k+2,\dots,2k+1$, the coordinates
$s_{k+2},\dots,s_{n-1}$ necessarily have at most $n-k-2<k$ zeroes and hence
can contain all the words of length $n-k-2.$ This establishes
the result for $n=k+2,\dots,2k+1.$

Now, let $n\ge 2k+2.$
For brevity denote each set on the right-hand side (RHS) of the
equation in Theorem
\ref{thm:recursion} by
\begin{equation}
\label{eq:tau}
\mathcal{T}_l(n)\triangleq\{(\bfs,0^{l-1},\alpha):\bfs\in\cS_{k,q}(n-l),\alpha\in\ZZ_q^*\}.
\end{equation}
Note that the sets {$\mathcal{T}_l(n)$}
are mutually disjoint for different $l$
since the last {$l$} coordinates have different structure for the different
sets. To show that {$\cS_{k,q}(n)\subseteq\cup_l \mathcal{T}_l(n)$},
note that any element $\bfu\in\cS_{k,q}(n)$ has at most $k-1$ zeroes in the last $n-1$
coordinates and hence the word $\bfu$ must be of the form $\bfu
= (0^k,u_{k+1},\dots,u_{n-l-1},0^{l-1},\alpha)$ where
$u_{n-l-1},\alpha\in\ZZ_q^*$ and $l\in\{1,\dots,k\}$. Thus,
{$\bfu\in\mathcal{T}_l(n)$.}

To show the reverse inclusion, let $l\in\{1,\dots,k\}$ and let
$\bfs\in\cS_{k,q}(n-l)$. Note that $\bfs$ ends with a nonzero element.
The word $(\bfs,0^{l-1},\alpha)$ where $\alpha\in\ZZ_q^*$, starts with
a sequence $0^k$, ends with a nonzero element and has at most $k-1$
consecutive zeroes in the last $n-1$ coordinates. Hence
$(\bfs,0^{l-1},\alpha)\in\cS_{k,q}(n)$ and the set
$\{(\bfs,0^{l-1},\alpha): \alpha\in\ZZ_q^*\}$
is a subset of $\cS_{k,q}(n)$. Hence, {$\mathcal{T}_l(n)\subseteq\cS_{k,q}(n)$}
for every $l=1,\dots,k.$
\end{proof}

\begin{cor}
\label{cor:cardinality}
The cardinality of  $\cS_{k,q}(n)$ for $n\ge3$ is given by the equation
\begin{equation*}
S_{k,q}(n) = |\cS_{k,q}(n)| =(q-1)^2F_{k,q}(n-k-2).
\end{equation*}
\end{cor}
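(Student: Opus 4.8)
The plan is to prove the cardinality formula $S_{k,q}(n) = (q-1)^2 F_{k,q}(n-k-2)$ by induction on $n$, using the recursive decomposition of $\cS_{k,q}(n)$ established in Theorem \ref{thm:recursion} together with the initialization \eqref{eq:fibonacci-initial}, namely $F_{k,q}(m) = q^m$ for $0 \le m \le k-1$. The two cases in Theorem \ref{thm:recursion} correspond exactly to the base cases ($k+2 \le n \le 2k+1$, equivalently $0 \le n-k-2 \le k-1$) and the inductive step ($n \ge 2k+2$) of the recurrence defining $F_{k,q}$.

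First I would handle the base cases. For $k+2 \le n \le 2k+1$, Theorem \ref{thm:recursion} gives $\cS_{k,q}(n) = \{(0^k,\alpha,\bfs,\beta) : \alpha,\beta \in \ZZ_q^*, \bfs \in \ff_q^{n-k-2}\}$. Since $\alpha$ and $\beta$ each range over the $q-1$ nonzero elements and $\bfs$ ranges freely over $q^{n-k-2}$ words, we get $S_{k,q}(n) = (q-1)^2 q^{n-k-2}$. Writing $m = n-k-2$, which satisfies $0 \le m \le k-1$, this equals $(q-1)^2 F_{k,q}(m)$ by the initialization \eqref{eq:fibonacci-initial}, as desired. (One should note the statement claims the formula for $n \ge 3$; when $k = 2$ and $n = 3$ one has $n - k - 2 = -1$, so a small remark may be needed that $\cS_{2,q}(3)$ is interpreted correctly, or that the intended range is $n \ge k+2$ with $k \ge 2$, hence $n \ge 4$; I would clarify that the formula is understood for $n \ge k+2$.)

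Next, the inductive step. Assume the formula holds for all lengths less than $n$, where $n \ge 2k+2$. By Theorem \ref{thm:recursion}, $\cS_{k,q}(n) = \bigcup_{l=1}^k \mathcal{T}_l(n)$ with $\mathcal{T}_l(n) = \{(\bfs,0^{l-1},\alpha): \bfs \in \cS_{k,q}(n-l), \alpha \in \ZZ_q^*\}$, and the proof of that theorem already records that these sets are mutually disjoint. Hence $S_{k,q}(n) = \sum_{l=1}^k |\mathcal{T}_l(n)|$. Within each $\mathcal{T}_l(n)$ the map $(\bfs,\alpha) \mapsto (\bfs,0^{l-1},\alpha)$ is a bijection onto $\mathcal{T}_l(n)$, so $|\mathcal{T}_l(n)| = (q-1)\, S_{k,q}(n-l)$. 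Since each $n - l$ for $l = 1,\dots,k$ lies strictly below $n$ and is at least $k+2$, the induction hypothesis gives $S_{k,q}(n-l) = (q-1)^2 F_{k,q}(n-l-k-2)$. Therefore
\begin{equation*}
S_{k,q}(n) = \sum_{l=1}^k (q-1)(q-1)^2 F_{k,q}(n-l-k-2) = (q-1)^2 (q-1)\sum_{l=1}^k F_{k,q}\big((n-k-2)-l\big) = (q-1)^2 F_{k,q}(n-k-2),
\end{equation*}
the last equality being the defining recurrence of the $(q-1)$-weighted $k$-generalized Fibonacci sequence. This closes the induction.

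The main obstacle is essentially bookkeeping rather than mathematics: one must verify that the index ranges match up cleanly — that $n \ge 2k+2$ corresponds to $n-k-2 \ge k$ (so the recurrence for $F_{k,q}$ applies and not the initialization), that the base range $k+2 \le n \le 2k+1$ corresponds to $0 \le n-k-2 \le k-1$ (so the initialization applies), and that in the inductive step every $n-l$ is still at least $k+2$ so the induction hypothesis is available. There is also the minor edge-case issue at $n=3$, $k=2$ noted above, where $n-k-2 = -1$ falls outside the stated initialization, which should be addressed with a sentence restricting attention to $n \ge k+2$. Beyond this, the disjointness and the bijections are immediate from the structure already spelled out in the proof of Theorem \ref{thm:recursion}, so no further work is needed there.
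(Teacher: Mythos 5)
Your proof is correct and follows essentially the same route as the paper: verify the base cases $k+2\le n\le 2k+1$ directly from the initialization \eqref{eq:fibonacci-initial}, then induct for $n\ge 2k+2$ using the disjoint union of the sets $\mathcal{T}_l(n)$, the factor $q-1$ from the choice of $\alpha$, and the defining recurrence of $F_{k,q}$. Your edge-case worry at $n=3$, $k=2$ is moot, since the standing constraint $2\le k\le n-2$ already forces $n\ge k+2$ and hence $n-k-2\ge 0$.
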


\begin{proof}
For $n=k+2,\dots,2k+1,$ the corollary can be readily verified from the
expression in \eqref{eq:fibonacci-initial} and Theorem
\ref{thm:recursion}. We use an induction argument for $n\ge 2k+2$. Assume
that the corollary is true for $n<N$ where $N\ge2k+2.$
First, note that by using the definition in \eqref{eq:tau}, we get
\begin{align*}
\sum_{l=1}^k|\mathcal{T}_l(N)| &=
    \sum_{l=1}^k\left|\{(\bfs,0^{l-1},\alpha):\bfs\in\cS_{k,q}(N-l),\alpha\in\ZZ_q^*\}\right|\\
    &= \sum_{l=1}^k (q-1) S_{k,q}(N-l).
\end{align*}
Now,
\begin{align*}
S_{k,q}(N) &= {\sum_{l=1}^k \big|\mathcal{T}_l(N)\big|}\\
            &= \sum_{l=1}^k (q-1) S_{k,q}(N-l) \\
            &= \sum_{l=1}^k (q-1) (q-1)^2 F_{k,q}(N-l-k-2)\\
            &= (q-1)^2 F_{k,q} (N-k-2).
\end{align*}
We used the induction argument in the second last step. This proves the
corollary.
\end{proof}

For fixed $n$ and $q$, the largest size of the set $\cS_{k,q}(n)$ can be
obtained by optimizing over the choice of $k.$ Let $S(n,q)$ denote this
maximum. It is given by the expression
\begin{equation}
\label{eq:snq}
S(n,q) = \max\{(q-1)^2 F_{k,q}(n-k-2): 2\le k\le n-2\}.
\end{equation}

In particular, the size $S(n,q)$ is upper bounded by the maximum
cardinality $C(n,q)$ of a cross-bifix-free code.

\subsection{Sizes of cross-bifix-free codes for small lengths}
The size of binary cross-bifix-free codes obtained in
Bilotta\etal\cite{bil12} is obtained by counting lattice paths, in
particular, Dyck
paths.
\begin{thm}[Bilotta\etal\cite{bil12}]
\label{thm:bilotta}
{Let $B(n)$ denote the size of a binary cross-bifix-free code of length $n$ constructed by Bilotta\etal\cite{bil12}.}
For $m\ge1$, let $C_m = \frac1{m+1}\binom{2m}m$ denote the $m$-th Catalan
number. Then
\begin{multline*}
    B(n) = \\\begin{cases}
C_m, & n = 2m+1, m\ge1,\\
\displaystyle\sum_{i=0}^{m/2} C_iC_{m-i}, &n = 2m+2,m\text{ odd},\\
\displaystyle\sum_{i=0}^{(m+1)/2} C_iC_{m-i}-C_{(m-1)/2}^2, &n = 2m+2,m\text{ even}.
\end{cases}
\end{multline*}
\end{thm}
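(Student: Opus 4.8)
The plan is to read the formula off the explicit combinatorial description of the code in \cite{bil12}; the argument recapitulates theirs. The starting observation is that in that construction every codeword of length $n$ is the image, under an explicit injection, of a \emph{body} drawn from a family $\mathcal{P}_n$ of Dyck-type lattice paths, the role of the rigid prefix/suffix pattern attached to the body being precisely to make the map injective and to force that no prefix of any codeword coincides with a suffix of any codeword. Thus $B(n)=|\mathcal{P}_n|$, and the theorem reduces to enumerating $\mathcal{P}_n$ for the two parities of $n$.

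For $n=2m+1$ odd, the body is a Dyck word of semilength $m$: a binary string containing $m$ ones and $m$ zeros all of whose partial sums are nonnegative when a one is read as $+1$ and a zero as $-1$. The number of these is the Catalan number $C_m=\frac1{m+1}\binom{2m}{m}$, by the classical enumeration --- via the cycle lemma, the reflection principle, or the functional equation $C(x)=1+xC(x)^2$ for the generating function $C(x)=\sum_{m\ge0}C_mx^m$. This settles the first line.

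For $n=2m+2$ even, the body has odd length and cannot be a single Dyck word; instead each admissible body splits canonically, at a central coordinate, into a left Dyck block of semilength $i$ and a right Dyck block of semilength $m-i$, so that the bodies with split point $i$ number $C_iC_{m-i}$. The cross-bifix-free requirement restricts the split point to the lower half $0\le i\le\lfloor m/2\rfloor$ --- a block too long relative to its partner would create a prefix matching a suffix --- and, in the balanced case $i=m/2$ that arises only for $m$ even, a further sub-family of $C_{m/2-1}^2$ bodies must be discarded because the structural coincidence between the two blocks would otherwise produce a forbidden prefix--suffix match (inside a single codeword or across two). Summing, $B(2m+2)=\sum_{i=0}^{\lfloor m/2\rfloor}C_iC_{m-i}$ when $m$ is odd and $B(2m+2)=\sum_{i=0}^{m/2}C_iC_{m-i}-C_{m/2-1}^2$ when $m$ is even; these are the second and third lines of the statement once the written bounds $m/2$ and $(m+1)/2$ are read as their integer parts. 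For $m$ odd one may further note, via Segner's identity $\sum_{i=0}^{m}C_iC_{m-i}=C_{m+1}$ and the symmetry $i\leftrightarrow m-i$, that the sum collapses to $\tfrac12C_{m+1}$.

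The Catalan enumeration in the odd case and the final rewriting are routine. The genuinely delicate step is the even case: one must check that the central split is well defined and bijective on every admissible body, pin down exactly why the split point is confined to the lower half, and --- the hardest point --- identify precisely which $C_{m/2-1}^2$ balanced bodies are incompatible with the cross-bifix-free property. I expect this combinatorial bookkeeping, rather than any of the counting, to be the crux, and the most reliable route is to extract it from the transfer-matrix / regular-language description of $\mathcal{P}_n$ in \cite{bil12} rather than re-deriving the bijection by hand.
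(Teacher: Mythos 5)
First, a point of reference: the paper does not prove this statement at all --- Theorem~\ref{thm:bilotta} is quoted from Bilotta~\emph{et al.}~\cite{bil12} and is used only as a benchmark for Table~\ref{tab:table} and for the asymptotic comparison in Subsection~\ref{subsec:comparison}. So there is no in-paper proof to compare against; a proof has to come from reconstructing \cite{bil12}. Judged on its own terms, your write-up is an outline rather than a proof. The odd case ($B(2m+1)=C_m$) is fine once one knows the codewords are in bijection with Dyck words of semilength $m$, but in the even case you explicitly defer the three decisive steps --- that each admissible body splits bijectively at a canonical central coordinate, why the split point is confined to one half, and exactly which squared-Catalan-many balanced bodies must be discarded --- to the reference, and you say so (``the most reliable route is to extract it from \cite{bil12}''). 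Those steps \emph{are} the content of the theorem for even $n$, so nothing has been established there.

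Second, there is a concrete error in your reading of the formula. Taking floors of the non-integer bounds, you arrive at $B(2m+2)=\sum_{i=0}^{\lfloor m/2\rfloor}C_iC_{m-i}$ for $m$ odd and $\sum_{i=0}^{m/2}C_iC_{m-i}-C_{m/2-1}^2$ for $m$ even. This contradicts the paper's own Table~\ref{tab:table}: for $n=12$ (so $m=5$) your formula gives $C_0C_5+C_1C_4+C_2C_3=66$ --- and indeed your ``collapse'' to $\tfrac12C_{m+1}=\tfrac12C_6=66$ --- whereas the tabulated value is $72$; for $n=10$ (so $m=4$) you get $23-1=22$ against the tabulated $23$. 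The parity labels in the displayed formula are evidently transposed: the reading consistent with the data (and the only one under which every index is an integer) is $B(2m+2)=\sum_{i=0}^{m/2}C_iC_{m-i}$ for $m$ \emph{even}, and $B(2m+2)=\sum_{i=0}^{(m+1)/2}C_iC_{m-i}-C_{(m-1)/2}^2$ for $m$ \emph{odd}, which reproduces $23,72,227,760$ at $n=10,12,14,16$. Any correct reconstruction must land on that version; in particular your assertion that the subtracted term arises ``only for $m$ even'' aims the combinatorial argument at the wrong case, so even the skeleton of the even-length analysis would have to be redone.
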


For values of $n \le 16,$ it is verified by numerical computations that the
sizes obtained by our construction are all optimal, except for the
value $n=9.$ In particular we get the Table \ref{tab:table} of values for $3\le
n\le 30.$ The first column gives the value of the word length $n$, the
second column shows the sizes of the codes obtained in
Bilotta\etal\cite{bil12},
the third column gives the sizes obtained from our construction after
optimizing over different values of $k$. Finally, the last column gives the
values of $k$ for which $S_{k,q}(n)$ achieves the maximal size in the third
column. The numbers in bold denote the sizes that are known to be optimal.

\begin{table}[!h]
\caption{Table comparing the values from \cite{bil12} with
\eqref{eq:snq}}
\centering
\begin{tabular}{|cccc|}
    \hline
$n$& \cite{bil12}  & Eq.~\eqref{eq:snq} & $k$ \\    \hline
3  & \textbf{1 }       & \textbf{1}       & - \\
4  & \textbf{1 }       & \textbf{1}       & 2 \\
5  & \textbf{2 }       & \textbf{2}       & 2 \\
6  & \textbf{3 }       & \textbf{3}       & 2 \\
7  & \textbf{5 }       & \textbf{5}       & 2 \\
8  & \textbf{8 }       & \textbf{8}       & 2 \\
9  & \textbf{14}       & 13      & 2\\
10 & 23       &  \textbf{24 }     & 3 \\
11 & 42       &  \textbf{44 }     & 3 \\
12 & 72       &  \textbf{81 }     & 3 \\
13 & 132      &  \textbf{149}     & 3 \\
14 & 227      &  \textbf{274}     & 3 \\
15 & 429      &  \textbf{504}     & 3 \\
16 & 760      & \textbf{927}      & 3 \\ \hline
\end{tabular}
\hspace{0.1in}
\begin{tabular}{|cccc|}
    \hline
$n$& \cite{bil12}  & Eq.~\eqref{eq:snq} & $k$ \\    \hline
17 & 1430     & 1705              & 3 \\
18 & 2529     & 3136              & 3 \\
19 & 4862     & 5768              & 3\\
20 & 8790     & 10671   & 4 \\
21 & 16796    & 20569   & 4 \\
22 & 30275    & 39648   & 4 \\
23 & 58786    & 76424   & 4 \\
24 & 107786   & 147312  & 4 \\
25 & 208012   & 283953  & 4 \\
26 & 380162   & 547337  & 4 \\
27 & 742900   & 1055026 & 4 \\
28 & 1376424  & 2033628 & 4 \\
29 & 2674440  & 3919944 & 4 \\
30 & 4939443  & 7555935 & 4\\ \hline
\end{tabular}
\label{tab:table}
\end{table}

The optimality of the values for $n\le 16$ is proved computationally by
setting up a specific program that searches for the largest
clique in a graph. The graph consists of vertices which correspond to the
set of all words in $\ff_2^n$ that are bifix-free. An edge exists between
two vertices, i.e., two words, if they are mutually cross-bifix-free. The
algorithm \verb|MaxCliqueDyn|\cite{maxcliquedyn} is used to determine the
maximum size of the clique in the graph. This algorithm shows that the
values denoted by bold in Table \ref{tab:table} are optimal.

Note that our construction has larger size
than the construction in \cite{bil12} for all values of $n,\ 13\le n\le30.$
This trend is observed asymptotically too, as we describe in the following
sections.

\section{Near optimality of the size $S(n,q)$}
\label{sec:near-optimality}
In this section we show that the size $S(n,q)$ is close to the maximum size
$C(n,q)$.
The ratio $S(n,q)/C(n,q)$ measures how close the construction in Section
\ref{sec:construction} is to the optimal value. The following theorem
gives an asymptotic lower bound on this ratio.
\begin{thm} \label{prop:limit}
The following limit holds:
\begin{equation}\label{eqn:limit}
\liminf_{n\to\infty} \frac{S(n,q)}{C(n,q)}\ge 2\frac{q-1}{qe}.
\end{equation}
\end{thm}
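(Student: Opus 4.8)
The plan is to derive the limit from the asymptotics of the two quantities in the ratio. For the denominator, I would invoke the upper bound $C(n,q)\le q^n/n$ — actually, by the more precise statement \eqref{eq:upper} of Theorem \ref{thm:main-thm}, $\limsup_{n\to\infty} C(n,q)/(q^n/n)\le 1/2$, so for any $\varepsilon>0$ and $n$ large we have $C(n,q)\le (1/2+\varepsilon)q^n/n$. For the numerator, the work is to show that $\liminf_{n\to\infty} S(n,q)/(q^n/n)\ge (q-1)/(qe)$; combining these two facts immediately gives $\liminf S(n,q)/C(n,q)\ge \big((q-1)/(qe)\big)/(1/2)=2(q-1)/(qe)$, which is \eqref{eqn:limit}. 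So the real content is the lower bound on $S(n,q)$, and I expect that to be the main obstacle.

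For that lower bound, recall from \eqref{eq:snq} and Corollary \ref{cor:cardinality} that $S(n,q)=\max\{(q-1)^2 F_{k,q}(n-k-2):2\le k\le n-2\}$, with $F_{k,q}$ the $(q-1)$-weighted $k$-generalized Fibonacci sequence under the initialization $F_{k,q}(m)=q^m$ for $0\le m\le k-1$. Since we are free to choose $k$, the strategy is to pick $k=k(n)$ growing slowly with $n$ (for instance $k=\lfloor \log_q n\rfloor$ or some similar slowly-growing function) and estimate $F_{k,q}(n-k-2)$ from below. The heuristic is that for $k$ large the recursion $F_{k,q}(m)=(q-1)\sum_{l=1}^k F_{k,q}(m-l)$ has dominant characteristic root close to $q$ (since the characteristic polynomial $x^k-(q-1)\sum_{l=0}^{k-1}x^l$ has its largest root approaching $q$ as $k\to\infty$), so $F_{k,q}(m)\approx c_k\, \rho_k^{\,m}$ with $\rho_k\to q$. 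The paper has flagged that it proves a Dresden-type formula for these sequences; I would use exactly such a result to write $F_{k,q}(m)$ asymptotically as (leading constant)$\times\rho_k^{\,m}$, control the leading constant (it should tend to something like $(q-1)/(q(k+1)-2k)$ or a similar explicit rational expression that is bounded away from $0$), and control $\rho_k^{\,n-k-2}$ versus $q^n/n$.

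The delicate quantitative step — and the place I expect the argument to be most technical — is tracking how close $\rho_k$ is to $q$ and choosing $k$ to balance two competing errors: on one hand $\rho_k<q$ loses a factor $(\rho_k/q)^{n}$, which for $\rho_k=q(1-\delta_k)$ with $\delta_k$ exponentially small in $k$ means we need $k$ at least of order $\log n$ to make $(\rho_k/q)^n = e^{-\Theta(n\delta_k)}$ negligible; on the other hand the factor $q^{-k-2}$ coming from the shift $n-k-2$ in the exponent, together with the leading constant, contributes a polynomial-in-$k$ (hence $\mathrm{polylog}$-in-$n$) loss, which is harmless next to $q^n/n$. Concretely, with $\delta_k \approx (q-1)/q^{\,k+1}$ one finds $n\delta_k\to 0$ once $k\ge (1+o(1))\log_q n$, so choosing say $k=\lceil 2\log_q n\rceil$ makes $(\rho_k/q)^{n-k-2}\to1$. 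Then $S(n,q)\ge (q-1)^2 F_{k,q}(n-k-2)\ge (1-o(1))(q-1)^2\cdot\frac{\text{const}}{k}\cdot\frac{q^{\,n-k-2}}{q^{-k-2}}\cdot\frac1{?}$ — I would assemble the constants carefully so that the product of $(q-1)^2$, the leading constant of the Dresden expansion, and $q^{-k-2}$ collapses to exactly $\frac{q-1}{q}\cdot\frac1n\cdot(1+o(1))\cdot\frac1e$; the factor $1/e$ should emerge from $\big(1-\tfrac1{k+1}\big)^{k}$-type limits hidden in the leading constant or from the $\rho_k^{-k}$ term. Once $S(n,q)\ge(1-o(1))\frac{q-1}{qe}\cdot\frac{q^n}{n}$ is established, dividing by the upper bound $C(n,q)\le(1/2+o(1))q^n/n$ finishes the proof. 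I would note that this theorem is essentially a corollary of \eqref{eq:lower} (proved later as Theorem \ref{prop:limit2}) and \eqref{eq:upper} (Theorem \ref{thm:upper-bound}), so in the final write-up the cleanest route may be to defer the numerator estimate to those sections and present here only the one-line division argument, promising the Fibonacci asymptotics below.
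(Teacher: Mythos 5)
Your reduction is exactly the paper's: Theorem~\ref{prop:limit} is obtained in one line by dividing the lower bound $\liminf_{n} S(n,q)/(q^n/n)\ge (q-1)/(qe)$ of Theorem~\ref{prop:limit2} by the upper bound $C(n,q)\le q^n/(2n-1)$ of Theorem~\ref{thm:upper-bound}; had you presented only that division and deferred the numerator estimate, as you suggest at the end, the proposal would be a faithful match.

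However, your sketch of the substantive ingredient --- the lower bound on $S(n,q)$ --- contains a quantitative error that would sink the argument as written. You take $k=\lceil 2\log_q n\rceil$ and describe the factor $q^{-k-2}$ coming from $F_{k,q}(n-k-2)\approx \text{const}\cdot\alpha^{\,n-k-2}$ as a ``polynomial-in-$k$, hence polylog-in-$n$'' loss. It is not: $q^{-k}$ is exponentially small in $k$, so with $k=2\log_q n$ it equals $n^{-2}$, the resulting bound is only $S(n,q)=\Omega(q^n/n^2)$, and $\liminf_n S(n,q)/(q^n/n)=0$ --- nothing usable. Making $(\rho_k/q)^n\to 1$ is the wrong objective; one must maximize the \emph{product} of $(\rho_k/q)^n$ and $n\,q^{-k}$. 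The paper does this by coupling $n=\lceil c\alpha^k\rceil$ with $c=q/(q-1)$, i.e.\ $\alpha^k\approx q^k\approx n(q-1)/q$, so that $q^{-k}$ supplies exactly the factor $q/((q-1)n)$ and the factor $1/e$ emerges from $(\alpha/q)^n=(1-(q-\alpha)/q)^n\to e^{-1}$ (because $n(q-\alpha)/q\to 1$ at this knife-edge), not from the leading Binet constant or a $\rho_k^{-k}$ term as you conjecture. Choosing $k$ noticeably larger than $\log_q n$ over-pays in the $q^{-k}$ factor, and choosing it noticeably smaller makes $(\alpha/q)^n$ decay super-polynomially; only the balanced choice produces the constant $(q-1)/(qe)$.
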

This lower bound is proved by showing a lower bound on $S(n,q)$ and an
upper bound on $C(n,q)$.
The derivation of the lower bound on $S(n,q)$ crucially depends on the
properties of the $(q-1)$-weighted $k$-generalized Fibonacci sequence of
numbers. We digress in the next subsection to first establish these needed
properties.

\subsection{Properties of the Fibonacci sequence $F_{k,q}(n)$}\label{sec:fib}
Levesque \cite{lev85} showed in a very general context that to every weighted
$k$-generalized Fibonacci sequence of numbers we can associate
a \emph{characteristic polynomial} (see Theorem~\ref{thm:lev} in the
Appendix). For the $(q-1)$-weighted $k$-generalized Fibonacci sequence,
this polynomial specializes to the following form
\begin{equation}
\label{eq:fibonacci-poly}
f(x) = x^k - (q-1) \sum_{i=0}^{k-1} x^i.
\end{equation}
Below, we state the properties of this polynomial and of the corresponding
Fibonacci numbers. The initialization sequence that we use is the one
described in \eqref{eq:fibonacci-initial}. The proofs in
this section are omitted for clarity of presentation and are instead
provided in Appendix.

\begin{prop}
\label{prop:fx}
The polynomial $f(x)$ has distinct roots with a unique real root
$\alpha \equiv \alpha(k,q)$ outside the unit circle.
The root $\alpha$ lies in the interval $(1,q)$.
\end{prop}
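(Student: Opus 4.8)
\emph{Proof proposal.} The plan is to work with the monic polynomial $g(x):=(x-1)f(x)=x^{k+1}-qx^{k}+(q-1)=x^{k}(x-q)+(q-1)$, whose zeros are exactly those of $f$ together with the extra zero $x=1$; this extra zero is genuinely extra, since $f(1)=1-k(q-1)\neq 0$ for $k\ge 2$, $q\ge 2$. The advantage of $g$ is its factored derivative $g'(x)=x^{k-1}\big((k+1)x-kq\big)$, which makes simplicity of the zeros immediate: the only possible common zeros of $g$ and $g'$ are $x=0$ and $x=kq/(k+1)$, but $g(0)=q-1\neq 0$, and since $g$ is strictly decreasing on $\big(0,kq/(k+1)\big)$ — an interval that contains $1$ because $k(q-1)>1$ — continuity gives $g\big(kq/(k+1)\big)<g(1)=0$. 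Hence $\gcd(g,g')=1$, so $g$ and therefore $f$ have only simple zeros.

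For the real zero, set $h(x):=f(x)/x^{k}=1-(q-1)\sum_{j=1}^{k}x^{-j}$, which is strictly increasing on $(0,\infty)$ (each $x^{-j}$ is decreasing) and runs from $-\infty$ to $1$; hence $f$ has a unique positive real zero $\alpha$. Since $h(1)=1-k(q-1)<0$ while a short geometric-series computation gives $h(q)=1-(q-1)\sum_{j=1}^{k}q^{-j}=q^{-k}>0$, we get $\alpha\in(1,q)$; equivalently $\alpha^{k}(q-\alpha)=q-1$.

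The heart of the argument is that $\alpha$ is the \emph{only} zero of $f$ of modulus $\ge 1$. A nonzero $\beta$ is a zero of $g$ iff it is a fixed point of $T(w):=q-(q-1)w^{-k}$. If $\beta$ is a zero of $f$ with $|\beta|\ge 1$, then $g(\beta)=0$ forces $|q-\beta|=(q-1)|\beta|^{-k}\le q-1$; and since the circles $|w|=1$ and $|w-q|=q-1$ meet only at $w=1$, while $\beta\neq 1$, in fact $|\beta|>1$ and $\beta$ lies in the open disk $D:=\{\,|w-q|<q-1\,\}$ (note $D\subset\{\,|w|>1\,\}$, since $|w|\ge q-|w-q|>1$ on $D$). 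So every such $\beta$ is a fixed point of $T$ inside $D$. Now $T$ is holomorphic on $\overline D$ (its only pole, at $0$, lies outside $\overline D$) and maps $D$ into $D$: for $w\in D$ one has $|w|>1$, hence $|T(w)-q|=(q-1)|w|^{-k}<q-1$. A holomorphic self-map of a disk with two distinct fixed points must, by the Schwarz lemma, be the identity; since $T$ is not the identity (as a rational map it has degree $k\ge 2$), $T$ has at most one fixed point in $D$. As $\alpha\in D$ is one of them — indeed $|q-\alpha|=q-\alpha=(q-1)\alpha^{-k}<q-1$ — it is the only one. Hence $\alpha$ is the unique zero of $f$ of modulus $\ge 1$: it is real, lies in $(1,q)$, and the remaining $k-1$ zeros lie strictly inside the unit disk, which proves the proposition.

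The main obstacle is precisely this uniqueness statement. The naive route would be to compare $x^{k}(x-q)$ with the constant $q-1$ on $|x|=1$ via Rouch\'e and deduce that $g$ has all but one zero inside the unit disk; but the required strict inequality fails — exactly at $x=1$, where $g$ itself vanishes — so Rouch\'e is inapplicable, and in fact the count it would suggest is wrong, since $g$ has \emph{two} zeros of modulus $\ge 1$, namely $1$ and $\alpha$. Recognizing that the offending boundary zero belongs to the factor $x-1$ and not to $f$, and then replacing the Rouch\'e count by the fixed-point/Schwarz argument on the disk $D$, is the crux of the proof; the remaining ingredients (the separability computation and the monotonicity of $h$ on the positive axis) are routine.
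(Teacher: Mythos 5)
Your proof is correct, and while the groundwork coincides with the paper's (you both pass to $g(x)=(x-1)f(x)=x^k(x-q)+(q-1)$, get simplicity of the roots from the factored derivative $g'(x)=x^{k-1}((k+1)x-kq)$, and locate the real root in $(1,q)$ by sign changes or monotonicity), the crux — that $\alpha$ is the \emph{only} root of $f$ of modulus at least $1$ — is handled by a genuinely different argument. The paper splits into the cases $|\gamma|>\alpha$ and $|\gamma|\in[1,\alpha]$, applies the triangle inequality to the recursion to get $f(|\gamma|)\le 0$ resp.\ $g(|\gamma|)\ge 0$, contradicts these with a sign lemma for $g$ on $(1,\alpha)$ and $(\alpha,\infty)$, and in the boundary case traces the equality condition to force $\gamma$ real and then explicitly checks $g(-1)\ne 0$ and $g(-\alpha)\ne 0$. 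You instead observe that every zero of $f$ of modulus $\ge 1$ is a fixed point of $T(w)=q-(q-1)w^{-k}$ lying in the disk $D=\{|w-q|<q-1\}$, that $T$ is a holomorphic self-map of $D$, and that a non-identity holomorphic self-map of a disk has at most one fixed point by the Schwarz lemma; since $\alpha\in D$ is such a fixed point, it is the only one. Your route is shorter and avoids both the equality analysis and the ad hoc elimination of $-1$ and $-\alpha$, at the cost of invoking a (standard) complex-analytic uniqueness principle where the paper stays entirely with the triangle inequality and elementary calculus; your observation that a naive Rouch\'e comparison on $|x|=1$ fails precisely because of the boundary zero at $x=1$ is accurate and is implicitly why the paper, too, avoids Rouch\'e. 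Both arguments are complete and reach the same conclusion.
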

The value of the root $\alpha$ is in fact close to $q$. An estimate of this
root is given by the following lemma.
\begin{lem} \label{lem:beta}
There exists a number $K_q$ such that the following holds. For all $k\ge K_q$,
there exists a $\beta \equiv\beta(k,q)$ in the interval $(q-\frac 1{q^{k-1}},q)$ such that
\begin{equation}\label{eqn:beta}
q-\frac {q-1}{\beta^k}<\alpha<q.
\end{equation}
\end{lem}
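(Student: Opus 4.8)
The plan is to convert the inequality~\eqref{eqn:beta} into a statement purely about the root $\alpha$, exploiting the fact that $\alpha$ obeys a clean fixed‑point identity. Multiplying the characteristic polynomial~\eqref{eq:fibonacci-poly} by $x-1$ gives
\[
(x-1)f(x) = x^{k+1}-qx^{k}+(q-1) = x^{k}(x-q)+(q-1),
\]
and since Proposition~\ref{prop:fx} guarantees $\alpha>1$, the equation $f(\alpha)=0$ is equivalent to $\alpha^{k}(q-\alpha)=q-1$, i.e.
\[
\alpha = q-\frac{q-1}{\alpha^{k}}.
\]
Substituting this identity into~\eqref{eqn:beta}, the condition $q-\frac{q-1}{\beta^{k}}<\alpha$ becomes $\frac{q-1}{\beta^{k}}>\frac{q-1}{\alpha^{k}}$, that is, simply $\beta<\alpha$; and the condition $\alpha<q$ is exactly Proposition~\ref{prop:fx}. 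Hence the whole lemma reduces to one crude estimate: it suffices to show that
\[
\alpha(k,q) > q-\frac{1}{q^{k-1}} \qquad\text{for all } k\ge 2,
\]
since then any $\beta$ in the nonempty interval $\big(q-q^{-(k-1)},\,\alpha\big)\subseteq\big(q-q^{-(k-1)},q\big)$ meets all the requirements, and one may take $K_{q}=2$ (its precise value being immaterial).

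To establish the crude estimate I would use the sign pattern of $f$ on $(1,\infty)$. By Proposition~\ref{prop:fx} the only root of $f$ in $(1,\infty)$ is $\alpha$; since $f(1)=1-k(q-1)<0$ for $k\ge 2$ and $f(x)\to+\infty$ as $x\to\infty$, we get $f(x)<0$ on $(1,\alpha)$ and $f(x)>0$ on $(\alpha,\infty)$. A short check shows $q-q^{-(k-1)}\in(1,q)$ whenever $k\ge 2$ and $q\ge 2$, so it is enough to verify $f\big(q-q^{-(k-1)}\big)<0$; as the factor $x-1$ is positive there, this is equivalent to $x^{k}(x-q)+(q-1)<0$ at $x=q-q^{-(k-1)}$, i.e. to
\[
\big(q-q^{-(k-1)}\big)^{k} > (q-1)\,q^{k-1}.
\]
Dividing both sides by $q^{k}$ turns this into $(1-q^{-k})^{k} > 1-q^{-1}$, which follows from Bernoulli's inequality $(1-q^{-k})^{k}\ge 1-kq^{-k}$ together with $k\le q^{k-1}$ (valid for all $q\ge 2$, $k\ge 1$ since $q^{k-1}\ge 2^{k-1}\ge k$), the Bernoulli step being strict for $k\ge 2$. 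This finishes the argument.

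The only genuinely nontrivial move is recognizing that~\eqref{eqn:beta} is a repackaging of the fixed‑point identity $\alpha=q-(q-1)\alpha^{-k}$; once that is seen, what remains is the one‑line Bernoulli estimate, so there is no real obstacle. The single place that demands a little care is the sign analysis of $f$ on $(1,\infty)$, which rests entirely on the distinct‑roots/unit‑circle information of Proposition~\ref{prop:fx}. If one prefers to avoid even that, the same computation shows directly that $\beta_{1}:=q-\dfrac{q-1}{\big(q-q^{-(k-1)}\big)^{k}}$ lies in $\big(q-q^{-(k-1)},\,\alpha\big)$ (the left inclusion is precisely the displayed inequality above, the right one uses $q-q^{-(k-1)}<q$ together with the fixed‑point identity), so $\beta=\beta_{1}$ exhibits the required value explicitly.
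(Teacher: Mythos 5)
Your proposal is correct, and it follows the same essential path as the paper: both arguments come down to evaluating $g(x)=x^k(x-q)+(q-1)$ at $x=q-1/q^{k-1}$ and showing this value is negative, which in both cases is rewritten as the inequality $(1-q^{-k})^k>1-1/q$. The differences are two refinements. First, the paper disposes of that inequality by noting $(1-q^{-k})^k\to 1$ as $k\to\infty$, which only yields an unspecified threshold $K_q$; you instead prove it for every $k\ge 2$ via Bernoulli's inequality together with $k\le q^{k-1}$, so your argument actually strengthens the lemma to the explicit constant $K_q=2$ (and, as stated in the paper, the lemma's existential form is all that is used later, so nothing downstream changes). Second, where the paper picks $\beta$ with $g(\beta)<0$ and then verifies directly that $g\bigl(q-\tfrac{q-1}{\beta^k}\bigr)<0$, you observe the fixed-point identity $\alpha=q-\tfrac{q-1}{\alpha^k}$ and note that the target inequality $q-\tfrac{q-1}{\beta^k}<\alpha$ is literally equivalent to $\beta<\alpha$; this is a cleaner packaging of the same computation. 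Your sign analysis of $f$ on $(1,\infty)$ (negative on $(1,\alpha)$, positive on $(\alpha,\infty)$) is legitimately derived from Proposition~\ref{prop:fx} plus $f(1)<0$, and coincides with what the paper records in Lemma~\ref{lem:fib2}. In short: same skeleton, but your quantitative handling of the key inequality is sharper than the paper's limiting argument.
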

Finally,  the Fibonacci numbers can be expressed in terms of this real root
$\alpha$. Let $[x]$ denote the integer closest to  the real number $x.$
\begin{prop}\label{prop:fib}
Let $q\ge2$. The $n$-th number in the $(q-1)$-weighted $k$-generalized Fibonacci
sequence is given by the expression
\begin{equation*}
F_{k,q}(n) = \left[
    \frac{(\alpha-1)\alpha^{n+1}}{(q+(k+1)(q-\alpha))(q-1)}
    \right].
\end{equation*}
\end{prop}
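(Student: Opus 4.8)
The plan is to derive a closed-form expression for $F_{k,q}(n)$ from the characteristic polynomial $f(x)$ in \eqref{eq:fibonacci-poly} and then show that the contribution of all roots inside the unit circle is bounded in magnitude by $1/2$, so that rounding to the nearest integer picks out exactly the term coming from $\alpha$. By Proposition \ref{prop:fx}, $f$ has $k$ distinct roots, say $\alpha = \alpha_1, \alpha_2,\dots,\alpha_k$, with $|\alpha_i| < 1$ for $i \ge 2$. Levesque's theorem (Theorem \ref{thm:lev} in the Appendix) gives $F_{k,q}(n) = \sum_{i=1}^k c_i \alpha_i^n$ for constants $c_i$ determined by the initialization \eqref{eq:fibonacci-initial}, or more precisely the constants are given by the standard formula $c_i = g(\alpha_i)/f'(\alpha_i)$ for an appropriate numerator polynomial $g$ encoding the initial values $F_{k,q}(0),\dots,F_{k,q}(k-1)$.

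First I would compute $f'(x) = kx^{k-1} - (q-1)\sum_{i=1}^{k-1} i x^{i-1}$ and simplify $f'(\alpha)$ using the relation $\alpha^k = (q-1)\sum_{i=0}^{k-1}\alpha^i = (q-1)\frac{\alpha^k - 1}{\alpha - 1}$, which holds since $\alpha \neq 1$. A convenient trick is to write $(x-1)f(x) = x^{k+1} - qx^k + (q-1)$, differentiate that identity, and evaluate at $\alpha$: this yields $(\alpha - 1) f'(\alpha) = (k+1)\alpha^k - qk\alpha^{k-1} = \alpha^{k-1}\big((k+1)\alpha - qk\big)$. Combined with $\alpha^k = (q-1)\frac{\alpha^k-1}{\alpha-1}$, after a short manipulation one should be able to rewrite $f'(\alpha)$ in the form involving $q + (k+1)(q-\alpha)$ that appears in the denominator of the claimed formula; tracking the numerator polynomial $g$ evaluated at $\alpha$ against the initialization $F_{k,q}(n) = q^n$ should produce the factor $(\alpha - 1)\alpha^{n+1}$ in the numerator. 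This identification of the leading coefficient $c_1 = \frac{\alpha-1}{(q + (k+1)(q-\alpha))(q-1)}\cdot\alpha$ (so that $c_1\alpha^n$ matches the bracketed expression) is mostly bookkeeping once the derivative identity is in hand.

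The substantive step is the error bound: I must show $\big|\sum_{i=2}^k c_i \alpha_i^n\big| < 1/2$ for all $n \geq 0$ in the relevant range. Since $|\alpha_i| < 1$, we have $|\alpha_i^n| \le 1$, so it suffices to bound $\sum_{i=2}^k |c_i|$. Here I would use the relation $\sum_{i=1}^k c_i \alpha_i^n = F_{k,q}(n) = q^n$ for $n = 0,\dots,k-1$ (the initialization), together with the explicit value of $c_1$, to control the remaining $c_i$; alternatively, bound each $|c_i| = |g(\alpha_i)/f'(\alpha_i)|$ directly using that the $\alpha_i$ lie in a bounded disk and that $f$ has separated roots. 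The cleanest route may be to note that for the initialization \eqref{eq:fibonacci-initial} the generating function of $F_{k,q}$ is a rational function whose partial-fraction decomposition makes the $c_i$ explicit, and then to estimate the sub-unit-circle part as a geometric-type sum. I expect this error estimate — showing the tail genuinely stays below $1/2$ uniformly, rather than just $o(1)$ — to be the main obstacle, since it requires either a clever exact evaluation at the small roots or a careful quantitative bound on $|f'(\alpha_i)|$ from below; the paper likely handles this with a lemma on the root locations proved alongside Proposition \ref{prop:fx} and Lemma \ref{lem:beta}, and I would lean on those root-separation facts to close the gap.
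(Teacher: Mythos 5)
Your overall architecture is the same as the paper's: apply Levesque's theorem (Theorem~\ref{thm:lev}), exploit the initialization \eqref{eq:fibonacci-initial} (which makes all the Levesque coefficients $v_j$ equal to $1$, so that $F_{k,q}(n)=\sum_i \frac{\gamma_i^n}{f'(\gamma_i)}\sum_{j=0}^{k-1}\gamma_i^j$ with $\sum_j\gamma_i^j=\gamma_i^k/(q-1)$), and evaluate $f'(\gamma_i)$ by differentiating $(x-1)f(x)=x^{k+1}-qx^k+(q-1)$ at a root, exactly the identity $(\gamma_i-1)f'(\gamma_i)=\gamma_i^{k-1}((k+1)\gamma_i-kq)$ you propose. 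That part of your plan is correct and is precisely the paper's computation; the main term $c_1\alpha^{n}$ comes out as $\frac{\alpha^{n+1}(\alpha-1)}{(q-1)(q+(k+1)(\alpha-q))}$.

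The gap is in the step you yourself flag as the ``main obstacle,'' and the routes you sketch for it point in the wrong direction. No root-separation fact, no lower bound on $|f'(\alpha_i)|$ via separated roots, and no generating-function partial fractions are needed: since each $|\gamma_i|<1$ for $i\ge2$, the paper bounds each term of $\sum_{i\ge 2}\frac{\gamma_i^{n+1}(\gamma_i-1)}{(q-1)((k+1)\gamma_i-kq)}$ crudely by $\frac{2}{(q-1)(k(q-1)-1)}$, using only $|\gamma_i-1|<2$ and $|kq-(k+1)\gamma_i|>kq-(k+1)=k(q-1)-1$, and then sums $k-1$ such terms. The catch — which your proposal misses — is that this sum is less than $1/2$ only because $q\ge 3$; for $q=2$ the same estimate gives roughly $2(k-1)/(k-1)=2$, and no bound of the form $\sum_{i\ge2}|c_i|<1/2$ obtained by termwise estimates of this kind can work. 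The paper does not prove the $q=2$ case at all: it invokes Dresden's separate and more delicate argument. As written, your plan would close the proof for $q\ge3$ but would stall at $q=2$, so you need either to cite Dresden for that case or to supply a genuinely finer analysis of the sub-unit roots when $q=2$. (Minor point: the sign in the stated denominator, $q+(k+1)(q-\alpha)$ versus the $q+(k+1)(\alpha-q)$ that the derivation actually produces, is an inconsistency in the statement itself; your bookkeeping should follow the derivation, which matches Dresden's formula at $q=2$.)
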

We note here that
Proposition \ref{prop:fx} is a generalization to $q\ge3$ of the result obtained by
Miles \cite{mil60} for $q=2.$ We adopt a technique similar to the one in Miller
\cite{mil71}. Additionally,
Proposition \ref{prop:fib} is a generalization of the result in
Dresden\cite{dre11} to $(q-1)$-weighted $k$-generalized Fibonacci numbers, for
$q\ge3.$ For $q=2,$ the expression above reduces to the expression for the
sequence $F_{k,2}(n)$ as obtained in \cite{dre11}.

\subsection{A Lower Bound on $S(n,q)$}
Using the properties of the Fibonacci numbers from the previous subsection,
we establish an asymptotic lower bound on the size $S(n,q)$.
\begin{thm} \label{prop:limit2}
The asymptotic size $S(n,q)$ satisfies the limit,
\begin{equation}\label{eqn:limit2}
\liminf_{n\to\infty} \frac{S(n,q)}{q^n/n}\ge \frac{q-1}{qe}.
\end{equation}
\end{thm}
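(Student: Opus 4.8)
The plan is to lower-bound $S(n,q)$ by choosing a good value of $k$ (growing slowly with $n$) and applying the exact expression for $F_{k,q}$ from Proposition~\ref{prop:fib} together with the root estimate from Lemma~\ref{lem:beta}. Recall from \eqref{eq:snq} that $S(n,q)\ge (q-1)^2 F_{k,q}(n-k-2)$ for every admissible $k$, so it suffices to pick one convenient $k=k(n)$ and show that the resulting quantity, divided by $q^n/n$, has $\liminf$ at least $(q-1)/(qe)$.

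First I would discard the nearest-integer brackets: by Proposition~\ref{prop:fib}, $F_{k,q}(n-k-2)$ equals $\frac{(\alpha-1)\alpha^{n-k-1}}{(q+(k+1)(q-\alpha))(q-1)}$ up to an additive error of $1/2$, which is negligible compared to the main term once $n$ is large (the main term grows like $\alpha^{n}$ with $\alpha>1$). So asymptotically
\begin{equation*}
S(n,q)\ \ge\ (q-1)^2 F_{k,q}(n-k-2)\ \sim\ \frac{(q-1)(\alpha-1)\,\alpha^{n-k-1}}{q+(k+1)(q-\alpha)}.
\end{equation*}
Next I would feed in Lemma~\ref{lem:beta}: for $k\ge K_q$ we have $q-\frac{q-1}{\beta^k}<\alpha<q$ with $\beta\in(q-q^{1-k},q)$, so $q-\alpha<\frac{q-1}{\beta^k}\le \frac{q-1}{(q-q^{1-k})^k}$, which tends to $0$ as $k\to\infty$. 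Hence $\alpha\to q$, $\alpha-1\to q-1$, and $(k+1)(q-\alpha)\to 0$ provided $k(q-\alpha)\to0$; since $q-\alpha$ decays like $q^{-k}$ (geometrically) while $k$ grows only polynomially, $k(q-\alpha)\to0$ automatically. Therefore the prefactor $\frac{(q-1)(\alpha-1)}{q+(k+1)(q-\alpha)}\to \frac{(q-1)^2}{q}$, a positive constant, so it does not affect the exponential rate and we get $S(n,q)\gtrsim c\,\alpha^{n-k}$ for a constant $c>0$ depending on $q$ and $k$.

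The crux is then to compare $\alpha^{n-k}$ with $q^n$. Write $\alpha^{n-k}=q^{n}\cdot(\alpha/q)^{n}\cdot\alpha^{-k}$. Using $\alpha/q > 1-\frac{1}{q}\cdot\frac{q-1}{\beta^k}\ge 1-\frac{q-1}{q(q-1)^k}$ (very crudely, since $\beta>q-1$ for $q\ge2$, refine as needed), we have $(\alpha/q)^n\ge \exp\!\big(n\log(1-\varepsilon_k)\big)$ with $\varepsilon_k=O(q^{-k})$; this is $\ge \exp(-2n\varepsilon_k)$ for $k$ large. Now choose $k=k(n)\to\infty$ slowly enough that $n\varepsilon_k=n\cdot O(q^{-k})\to 0$ — e.g. $k(n)=\lceil 2\log_q n\rceil$ makes $nq^{-k}=O(1/n)\to0$. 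With this choice $(\alpha/q)^n\to 1$. It remains to handle the factor $\alpha^{-k}\cdot(1/n)$-type losses: we have $\alpha^{-k}\le q^{-k}\cdot(\text{bounded})$; but wait — that would kill too much. The right accounting is $S(n,q)/(q^n/n) \gtrsim c\cdot n\cdot \alpha^{n-k}/q^n = c\cdot n\cdot (\alpha/q)^n\cdot q^{-k}\cdot(\alpha/q)^{-k}\cdot$(lower-order). Here I must be more careful: I would instead not over-simplify $\alpha\approx q$ in the exponent $\alpha^{n-k}$, and carry the comparison as $\log S(n,q)\ge (n-k)\log\alpha + O(\log n)$, then subtract $\log(q^n/n)=n\log q-\log n$, obtaining $\log\big(S(n,q)/(q^n/n)\big)\ge -n(\log q-\log\alpha) - k\log\alpha + \log n + O(\log n)$. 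Since $\log q-\log\alpha = -\log(\alpha/q)=O(q^{-k})$ and $n q^{-k}\to0$, the first term is $o(1)$; and $k\log\alpha = O(k)=O(\log n)$, which is dominated by $\log n$ with room to spare once we track the exact constant $e$. The appearance of $e$ comes from optimizing the tradeoff $n\varepsilon_k$ versus $k$: the sharpest bound is obtained by balancing these and yields exactly the $1/e$ factor, matching the claimed $(q-1)/(qe)$.

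The main obstacle — and the step I expect to require the most care — is making the two competing error terms ($n\cdot(q-\alpha)/q$ from $(\alpha/q)^n$, and the $k$-dependent polynomial factors from the prefactor and from $\alpha^{-k}$) balance to produce precisely the constant $1/e$ rather than some weaker constant. Concretely, with $q-\alpha\approx (q-1)q^{-k}$, one has $(\alpha/q)^{n}\approx e^{-n(q-1)q^{-k}/q}$ and an extra polynomial-in-$k$ loss; choosing $k$ so that $n(q-1)q^{-k}/q\to 1$ (i.e. $q^{k}\sim n(q-1)/q$) makes $(\alpha/q)^n\to e^{-1}$, while the prefactor still converges to $(q-1)^2/q$ and the $\alpha^{-k}$, $(k+1)(q-\alpha)$ corrections are $n^{o(1)}$ hence absorbed; together with the surviving $(q-1)^2/q$ and one factor of $(q-1)^{-1}$ (no, let me recount: $(q-1)^2\cdot\frac{1}{(q-1)}=(q-1)$ from the Proposition~\ref{prop:fib} denominator, times $\frac{1}{q}$) this gives $\frac{q-1}{q}\cdot\frac{1}{e}$, as required. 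I would present the argument by fixing this optimal $k=k(n)$ at the outset, so that all limits are clean, and verify the three ingredients (bracket removal, prefactor limit, exponential balance) in that order.
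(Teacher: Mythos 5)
Your final argument is correct and is essentially the paper's own proof: both lower-bound $S(n,q)$ via Corollary~\ref{cor:cardinality} and Proposition~\ref{prop:fib}, estimate $\alpha$ via Lemma~\ref{lem:beta}, and balance $k$ against $n$ so that $n(q-1)q^{-k}/q \to 1$ (the paper writes this as $n=\lceil c\,\alpha^k\rceil$ and optimizes $c=q/(q-1)$ at the end), which produces the factor $e^{-1}$. Only note that your first trial choice $k=\lceil 2\log_q n\rceil$ would indeed fail (as you observed), and that the factor $n/\alpha^{k+1}$ is not ``absorbed'' as an $n^{o(1)}$ correction but contributes the essential constant $1/(q-1)$ --- your final tally nevertheless comes out right.
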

\begin{proof}
Using Corollary \ref{cor:cardinality} and Proposition \ref{prop:fib} in
successive steps we obtain,
\begin{align*}
\frac{S(n,q)}{q^n/n} & \ge \frac{n(q-1)^2F_{k,q}(n-k-2)}{q^n}\\
& \ge \frac{n(q-1)^2\left(\frac{(\alpha-1)(\alpha^{n-k-1})}{(q+(k+1)(q-\alpha))(q-1)}-1\right)}{q^{n}} \\
& \ge  \left(\frac {q-1}{q}\right) \left(\frac
    {\alpha-1}{\alpha}\right)\left(\frac{\alpha}{q}\right)^n
    \left(\frac{n}{\alpha^k}\right) - o(1),
\end{align*}
where the term $o(1)\to0$ as $n\to\infty.$
To derive the asymptotics we choose $n$ as an increasing function of $k$:
$$
n \equiv n(k) = \ceiling{c \alpha^k},
$$
where $c$  is a positive constant. Note that $\alpha$ is also
a function of $k.$ We obtain,
\begin{align*}
\frac{S(n,q)}{q^n/n}& \ge  \left(\frac {q-1}{q}\right) \left(\frac
    {\alpha-1}{\alpha}\right)\left(\frac{\alpha}{q}\right)^{\ceiling{c\alpha^k}}
    \left(\frac{\ceiling{c\alpha^k}}{\alpha^k}\right) - o(1)\\
& \ge  \left(\frac {q-1}{q}\right) \left(\frac {\alpha-1}{\alpha}\right)\left(\frac{\alpha}{q}\right)^{(c\alpha^k+1)}\cdot c - o(1) \\
& =  \left(\frac {q-1}{q}\right) \left(\frac
    {\alpha-1}{q}\right)c\left(\frac{\alpha}{q}\right)^{c\alpha^k}
    - o(1).
\end{align*}
The last term in the right-hand side (RHS) of the equation above can be further lower bounded
by using Lemma~\ref{lem:beta}.
We assume that there exists a number $K_q$
and $k\ge K_q$, as required by the lemma.
\begin{align*}
c\left(\frac{\alpha}{q}\right)^{c\alpha^k}
& \ge c\left(1-\frac{q-1}{q\beta^k}\right)^{cq^k}
\\
& =c\left(\left(1-\frac{q-1}{q\beta^k}\right)^{c\beta^k}\right)^{\left(q/{\beta} \right)^k}.
\end{align*}
The RHS of the above equation tends to
$
ce^{-\frac{c(q-1)}{q}}
$ as $k\to\infty$
since $\left(1+\frac{1}{x}\right)^x\to e$ as $x\to\infty$ and $\beta(k,q)\to q$ as $k\to \infty$.
The term $ce^{-\frac{c(q-1)}{q}}$ attains a maximum of $\frac{q}{(q-1)e}$ when $c=q/(q-1)$. The
theorem follows by substituting this value into the lower bound above.
\end{proof}

\subsection{An upper bound on the maximum size $C(n,q)$}
Let $M$ denote the size of a cross-bifix-free code of length $n$ over an
alphabet of size $q$.
An upper bound for the maximum size of a cross-bifix-free code is readily
obtained from the study of the statistical properties of such sets in the
data stream. The main object of study is the time when the search for
any word of the cross-bifix-free code in the data stream returns with
a positive match.
Bajic\etal\cite{baj03,baj04} establish the probability distribution
function of this time, the expected time duration for a match, and the
variance of this distribution. The variance $\sigma^2$ of the time for the
first match is given by the expression \cite[eq.~(18)]{baj04}
\begin{equation}
\label{eq:variance}
\sigma^2 = (1-2n)\frac{q^n}M + \frac{q^{2n}}{M^2}.
\end{equation}
Using the property that the variance is always nonnegative immediately
gives us the required upper bound on any cross-bifix-free code.
In particular, we have the theorem
\begin{thm}
\label{thm:upper-bound}
Let $C(n,q)$ denote the maximum size of a cross-bifix-free code in
$\ZZ_q^n$. Then,
$$C(n,q)\le \frac{q^n}{2n-1}.$$
\end{thm}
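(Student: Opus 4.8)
The plan is to exploit the variance formula \eqref{eq:variance} together with the elementary fact that a variance is nonnegative. Set $M = C(n,q)$, the size of a largest cross-bifix-free code in $\ZZ_q^n$. The quantity $\sigma^2$ in \eqref{eq:variance} is the variance of the (nonnegative, integer-valued) random time at which the first match of some codeword occurs when scanning a random data stream, so $\sigma^2 \ge 0$.

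First I would simply write down the inequality $\sigma^2 \ge 0$ using \eqref{eq:variance}:
\begin{equation*}
(1-2n)\frac{q^n}{M} + \frac{q^{2n}}{M^2} \ge 0.
\end{equation*}
Next, multiply through by $M^2/q^n > 0$ (valid since $M \ge 1$ and $q \ge 2$), which preserves the inequality and yields
\begin{equation*}
(1-2n)M + q^n \ge 0,
\end{equation*}
i.e. $q^n \ge (2n-1)M$. Since $n \ge 3$ in all cases of interest (a cross-bifix-free code of length $n$ requires $n \ge 3$ for nontrivial bifix-free words), the factor $2n-1$ is positive, so dividing gives $M \le q^n/(2n-1)$. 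As this holds for any cross-bifix-free code of length $n$, in particular for a maximum one, we conclude $C(n,q) \le q^n/(2n-1)$.

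There is no genuine obstacle here: the work has been done in establishing \eqref{eq:variance}, which is imported from Bajic \etal\cite{baj04}. The only point requiring any care is justifying that the expression in \eqref{eq:variance} is literally a variance (hence nonnegative) rather than a formal quantity — this is exactly what the cited statistical analysis provides, since the random variable in question (time to first match of a codeword in the stream) is well-defined once $M \ge 1$. One might also note as a sanity check that the bound is consistent with Theorem \ref{prop:limit2}: the upper bound gives $\limsup_n C(n,q)/(q^n/n) \le \frac12$, matching \eqref{eq:upper}, while the lower bound $\frac{q-1}{qe}$ from Theorem \ref{prop:limit2} falls below it, so the two are compatible and together pin $C(n,q)$ to within the claimed constant factor.
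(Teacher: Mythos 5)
Your proposal is correct and follows exactly the paper's route: it invokes the variance expression \eqref{eq:variance} from Bajic~et~al., imposes $\sigma^2 \ge 0$, and rearranges to obtain $C(n,q)\le q^n/(2n-1)$. The paper states this as an immediate consequence; you have merely spelled out the (correct) algebra.
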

We remark that this upper bound, albeit immediate from
\eqref{eq:variance}, was not noted
in the previous works on the size of the cross-bifix-free codes.
Combining Theorem \ref{prop:limit2} and Theorem \ref{thm:upper-bound}, we obtain
Theorem~\ref{prop:limit} {and Theorem \ref{thm:main-thm}}.

\subsection{Comparison to earlier results}\label{subsec:comparison}

{To compare the construction in this work with the new construction of
binary cross-bifix-free codes \cite{bil12} and the construction of
distributed sequences \cite{wij00},
we study the asymptotics of their respective constructions
for large $n$. In both cases, we exhibit that the size of the previous constructions
is a negligible fraction of $2^n/n$,
 in contrast to the nearly optimal construction described in the previous
section.}

The asymptotic behavior of the construction in \cite{bil12} is obtained
from the expressions in Theorem~\ref{thm:bilotta}. We obtain that
$$
B(n) \begin{cases}
    = C_m, & n = 2m+1,\\
    \le C_{m+1}, & n = 2m+2,
\end{cases}
$$
where $C_m = \frac1{m+1}\binom{2m}m$ is the $m$-th Catalan number and
$m\ge1.$
Using Stirling's approximation, we get that the
number $C_m$ is approximately,
$$
C_m \simeq \frac1{m+1}\frac{2^{2m}}{\sqrt{\pi m}}.
$$
Thus for $n$ odd,
$$\frac{B(n)}{2^n/n} =  \frac{(2m+1)C_m}{2^{2m+1}} \simeq
\frac{2m+1}{2(m+1)}\frac{1}{\sqrt{\pi m}},$$
which goes to zero as $2m+1 = n \to \infty$. Similar conditions hold for the case $n=2m+2.$ Thus the
construction in \cite{bil12} is a negligible fraction of $2^n/n$.

{On the other hand,
van Wijngaarden and Willink \cite[Eq.~(4)]{wij00} showed that for
a set of distributed sequences of length $n$, and with
$h$ synchronisation positions,
\begin{equation}\label{eq:distseq}
n\le \floor{h^2/4}+1.
\end{equation}
Let $D(n)$ denote the maximum size of a set of distributed sequences.
Then it follows from \eqref{eq:distseq} that
\begin{equation*}
D(n)\le 2^{n-h}\le 2^{n-2\sqrt{n-1}}.
\end{equation*}
Hence, the ratio $D(n)/(2^n/n)$ tends to zero with increasing $n$.
}

\section{Conclusion}
We provided a new construction of cross-bifix-free codes that are close to
the maximum possible size. The construction for the binary codes is shown
to be larger than the previously constructed codes for all lengths
$n\le30,$ barring an exception at $n=9.$ We also provided the first
construction of $q$-ary cross-bifix-free  codes for $q>2.$
In the process,
we established new results on the Fibonacci sequences, generalizing some
earlier works on these sequences.

%\section{Appendix}
\appendix\section{Appendix}
\label{sec:appendix}

In this appendix, we provide the proofs of the results on the Fibonacci
sequences that are stated in Section \ref{sec:fib}. First, we recall a very
general theorem on weighted $k$-generalized Fibonacci sequences proved in
Levesque\cite{lev85}.
\begin{thm}[Levesque\cite{lev85}]
\label{thm:lev}
Let $k\ge 2$. Let $F_k(n)$ be defined by the following recurrence relation,
\begin{equation*}
    F_k(n)=\sum_{i=1}^k a_iF_k(n-i), \mbox{ for $n\ge k$},
\end{equation*}
for $a_i \in \ZZ,\ i=1,\dots,k,$
and with the initial conditions, $F_k(0),F_k(1),\ldots, F_k(k-1)$.
Additionally, suppose that the \emph{characteristic polynomial} $h(x)$ associated with the sequence
$\{F_k(n)\}_{k=0}^{\infty}$,
\begin{equation*}
    h(x)=x^k-\sum_{i=0}^{k-1} a_{k-i} x^i,
\end{equation*}
has distinct roots $\gamma_1,\gamma_2,\ldots,\gamma_k$. Then, for $n\ge k$,
the values of $F_k(n)$ are given by the expression
$$
F_k(n) =\sum_{j=0}^{k-1} p_{n-j}v_j,
$$
where,
\begin{align*}
v_0 & = u(0),\\
v_j & = u(j)-\sum_{i=1}^j a_i u(j-i), \quad\mbox{ for $1\le j\le k-1$},\\
p_j  &= \sum_{i=1}^k \frac{\gamma^{k-1+j}_i}{h'(\gamma_i)},\quad \mbox{for $j\ge 1$}.
\end{align*}
\end{thm}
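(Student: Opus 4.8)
Since this is Levesque's theorem, a full proof is in \cite{lev85}; the plan here is to reconstruct it by ordinary generating functions together with a partial-fraction decomposition, which is the cleanest route. Let $G(x)=\sum_{n\ge0}F_k(n)x^n$. First I would multiply the recurrence $F_k(n)=\sum_{i=1}^k a_i F_k(n-i)$ by $x^n$, sum over $n\ge k$, and reindex the inner sums; this produces the identity
\[
G(x)\Bigl(1-\sum_{i=1}^k a_i x^i\Bigr)=\sum_{j=0}^{k-1}v_j x^j ,
\]
in which the coefficient of $x^j$ on the right collapses precisely to $v_j=F_k(j)-\sum_{i=1}^j a_i F_k(j-i)$ (here $u(j)$ in the statement is $F_k(j)$). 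This bookkeeping is elementary but must be tracked carefully, since it is exactly where the upper limit $j$ in the inner sum defining $v_j$ comes from, and where the requirement that the initial data be $F_k(0),\dots,F_k(k-1)$ is used.

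Next I would factor the denominator. The characteristic polynomial $h(x)=x^k-\sum_{i=0}^{k-1}a_{k-i}x^i=\prod_{i=1}^k(x-\gamma_i)$ is monic with the distinct roots $\gamma_i$, so its reciprocal polynomial satisfies $1-\sum_{i=1}^k a_i x^i=x^k h(1/x)=\prod_{i=1}^k(1-\gamma_i x)$. Distinctness of the $\gamma_i$ permits the partial-fraction decomposition $1/\prod_i(1-\gamma_i x)=\sum_{i=1}^k c_i/(1-\gamma_i x)$; computing the residue at $x=1/\gamma_i$ and using $h'(\gamma_i)=\prod_{l\ne i}(\gamma_i-\gamma_l)$ gives $c_i=\gamma_i^{\,k-1}/h'(\gamma_i)$. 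Expanding $1/(1-\gamma_i x)=\sum_{m\ge0}\gamma_i^m x^m$, the series $\sum_i c_i/(1-\gamma_i x)$ has $x^n$-coefficient $\sum_{i=1}^k \gamma_i^{\,k-1+n}/h'(\gamma_i)=p_n$.

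Combining the two steps, $G(x)=\bigl(\sum_{j=0}^{k-1}v_j x^j\bigr)\bigl(\sum_{n\ge0}p_n x^n\bigr)$, and extracting the coefficient of $x^n$ yields $F_k(n)=\sum_{j=0}^{k-1}p_{n-j}v_j$. The restriction $n\ge k$ is exactly what guarantees $n-j\ge 1$ for every $j$ in the range, so each $p_{n-j}$ is one of the quantities defined in the statement. The only genuine obstacles are the first step — faithfully turning the recurrence plus initial conditions into the rational form with numerator $\sum_j v_j x^j$ — and the residue identity $c_i=\gamma_i^{\,k-1}/h'(\gamma_i)$; the rest is routine power-series manipulation. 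One could instead skip generating functions, start from the ansatz $F_k(n)=\sum_i\lambda_i\gamma_i^n$, and solve the resulting Vandermonde system for the $\lambda_i$ in terms of the corrected initial data $v_j$, but the algebra is messier that way.
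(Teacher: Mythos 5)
The paper offers no proof of this statement: it is quoted verbatim as Levesque's theorem and used as a black box, so there is nothing internal to compare against. Your generating-function reconstruction is correct and self-contained. The bookkeeping in the first step checks out: summing the recurrence over $n\ge k$ gives $G(x)\bigl(1-\sum_{i=1}^k a_i x^i\bigr)=\sum_{n=0}^{k-1}F_k(n)x^n-\sum_{i=1}^k a_i x^i\sum_{m=0}^{k-i-1}F_k(m)x^m$, whose $x^j$-coefficient is exactly $v_j=F_k(j)-\sum_{i=1}^{j}a_iF_k(j-i)$ (with $u(j)=F_k(j)$, the only sensible reading of the undefined $u$ in the statement), and the constraint $m\le k-i-1$ is precisely what kills all coefficients with $j\ge k$. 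The reciprocal-polynomial identity $1-\sum_{i=1}^k a_ix^i=x^kh(1/x)=\prod_i(1-\gamma_ix)$ and the residue computation $c_i=1/\prod_{l\ne i}(1-\gamma_l/\gamma_i)=\gamma_i^{k-1}/h'(\gamma_i)$ are both right, and you correctly observe that $n\ge k$ forces $n-j\ge 1$ so that only the $p_j$ with $j\ge1$ defined in the statement ever appear. The one implicit assumption worth flagging is $\gamma_i\ne 0$ in the residue step; if $a_k=0$ one root vanishes, but that root contributes $0$ to every $p_n$ with $n\ge1$ and the decomposition degenerates harmlessly, and in this paper's application $a_i=q-1\ne0$ anyway. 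Your alternative remark (Vandermonde ansatz $F_k(n)=\sum_i\lambda_i\gamma_i^n$) is in fact closer in spirit to Levesque's original treatment of $m$-th order linear recurrences, but either route establishes the stated formula.
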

For $a_i = q-1,\ i=1,\dots,k$, we obtain the corresponding expressions for
the $(q-1)$-weighted $k$-generalized Fibonacci numbers. In particular, the
polynomial $h(x)$ reduces to the polynomial $f(x)$ defined in
\eqref{eq:fibonacci-poly}.

We proceed with the proofs of the propositions in Section~\ref{sec:fib}. In
order to prove Proposition~\ref{prop:fx}, we first establish two lemmas
below. Define a polynomial $g(x)$ as
\begin{equation}
\label{eq:gx}
g(x) \triangleq   (x-1)f(x)=x^k(x-q)+(q-1).
\end{equation}
\begin{lem} \label{lem:fib1}
The polynomial $f(x)$ has a real root  in the interval $(1,q)$.
\end{lem}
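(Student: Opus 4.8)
The plan is to analyze the sign of the polynomial $f(x) = x^k - (q-1)\sum_{i=0}^{k-1} x^i$ at the endpoints of the interval $(1,q)$ and invoke the Intermediate Value Theorem. Since $f$ is a polynomial it is continuous, so it suffices to exhibit a sign change. It is slightly more convenient to work instead with $g(x) = (x-1)f(x) = x^k(x-q) + (q-1)$, as defined in \eqref{eq:gx}, since this closed form avoids the geometric sum; one only has to be careful that multiplying by $(x-1)$ does not introduce or destroy roots in the open interval $(1,q)$, which it does not since $x-1 > 0$ there, so $f$ and $g$ have exactly the same sign on $(1,q)$.

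First I would evaluate at $x = 1$: here $g(1) = 1\cdot(1-q) + (q-1) = 0$, which is expected since $x=1$ is the spurious root introduced by the factor $(x-1)$; so instead I examine $f(1) = 1 - (q-1)k$ directly, which is $\le 1 - (q-1) \cdot 2 = 3 - 2q \le -1 < 0$ for $q \ge 2$ and $k \ge 2$. Thus $f(1) < 0$. Next I would evaluate at $x = q$: $g(q) = q^k(q-q) + (q-1) = q-1 > 0$ for $q \ge 2$, and since $q - 1 > 0$ we get $f(q) = g(q)/(q-1) = (q-1)/(q^k \cdot 0 \ldots)$ — more carefully, $f(q) = g(q)/(q-1) > 0$. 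Hence $f$ changes sign on $(1,q)$, and by the Intermediate Value Theorem there is a real root $\alpha \in (1,q)$.

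I do not anticipate a genuine obstacle here; the only mild subtlety is bookkeeping the edge cases $q=2$ and small $k$ to make sure the strict inequalities $f(1) < 0$ and $f(q) > 0$ hold in all cases covered by the hypotheses ($k \ge 2$, $q \ge 2$), which the estimates above handle. The uniqueness of this root outside the unit circle and the fact that all roots are distinct — the remaining content of Proposition~\ref{prop:fx} — would be handled separately (presumably via a Rouch\'e-type or Descartes'-rule argument on $g$, or by the argument attributed to Miller~\cite{mil71}); this lemma only claims existence of one real root in $(1,q)$, so the sign-change argument is all that is required.
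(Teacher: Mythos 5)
Your proof is correct and matches the paper's argument exactly: the paper likewise observes $f(1)=1-k(q-1)<0$ and $f(q)=g(q)/(q-1)=1>0$ and concludes by the Intermediate Value Theorem. Your extra care about the spurious root at $x=1$ and the edge cases $q=2$, $k=2$ is fine but not needed beyond what the paper states.
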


\begin{proof} This follows from the fact that $f(1)=1-k(q-1)<0$ and $f(q)=g(q)/(q-1)=1>0$.
\end{proof}

\begin{lem}\label{lem:fib2}
Let $\alpha \equiv \alpha(k,q)$ be the real root of $f(x)$ in $(1,q)$.
Then the polynomial $g(x)$, and consequently the polynomial $f(x)$,
satisfies the following inequalities.
\begin{enumerate}
\item $g(x)>0$ for $x\in (\alpha,\infty)$,
\item $g(x)<0$ for $x\in (1,\alpha)$.
\end{enumerate}
\end{lem}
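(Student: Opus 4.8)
The plan is to analyze the sign behavior of $g(x) = x^k(x-q) + (q-1)$ directly, exploiting the fact that $g$ has very few sign changes and we already know its root structure near $\alpha$. First I would record the basic data: $g(1) = 1-q+q-1 = 0$, so $x=1$ is always a root of $g$ (which is consistent with $g(x)=(x-1)f(x)$), and $g(q) = q-1 > 0$. Since $g(x) = (x-1)f(x)$ and $\alpha$ is a root of $f$ in $(1,q)$, it is also a root of $g$. The derivative is $g'(x) = (k+1)x^k - kqx^{k-1} = x^{k-1}\bigl((k+1)x - kq\bigr)$, which for $x > 0$ vanishes only at $x_0 = \tfrac{kq}{k+1} \in (1,q)$; thus on $(0,\infty)$ the function $g$ is strictly decreasing on $(0,x_0)$ and strictly increasing on $(x_0,\infty)$, so it has at most two positive roots. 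We already know two of them, $x=1$ and $x=\alpha$, so these are exactly the positive roots, and necessarily $1 < x_0 < \alpha$ (the minimum lies strictly between the two roots).

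With the shape of $g$ pinned down, both inequalities follow immediately. On $(\alpha,\infty)$ the function $g$ is eventually strictly increasing (once past $x_0$, which is below $\alpha$) and has no root, while $g(q) > 0$ with $q$ possibly on either side of... actually $q > \alpha$ since $\alpha \in (1,q)$, so $g(q) > 0$ together with $g$ having no root on $(\alpha,\infty)$ and being continuous forces $g(x) > 0$ for all $x \in (\alpha,\infty)$, giving part (i). For part (ii), on the interval $(1,\alpha)$ the function $g$ has no root (the only positive roots being $1$ and $\alpha$); since $g$ decreases from $g(1)=0$ down to its minimum at $x_0$ and then increases back up to $g(\alpha)=0$, it is strictly negative throughout the open interval $(1,\alpha)$, giving part (ii). The statement about $f(x)$ then follows by dividing by $x-1$: on $(\alpha,\infty)$ we have $x-1>0$ and $g(x)>0$, hence $f(x)>0$; on $(1,\alpha)$ we have $x-1>0$ and $g(x)<0$, hence $f(x)<0$.

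I do not anticipate a genuine obstacle here — the argument is essentially a convexity/monotonicity count of roots — but the one point requiring a little care is verifying that the unique positive critical point $x_0 = kq/(k+1)$ indeed lies strictly between $1$ and $\alpha$ rather than outside $[1,\alpha]$. This is forced by the fact that a smooth function with exactly two roots $1 < \alpha$ and exactly one interior critical point must have that critical point between the roots (otherwise it would be monotone on $[1,\alpha]$ and could not vanish at both ends without being identically zero); alternatively one checks $x_0 = kq/(k+1) > 1$ directly since $kq > k+1$ for $q \ge 2, k \ge 2$, and $x_0 < q$ trivially, and then notes $g$ cannot have a root in $(x_0, \alpha)$ other than $\alpha$ by the monotonicity on $(x_0,\infty)$. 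Either phrasing closes the gap cleanly.
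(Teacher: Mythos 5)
Your proposal is correct and follows essentially the same route as the paper: compute $g'(x) = x^{k-1}((k+1)x - kq)$, deduce that $g$ is strictly decreasing then strictly increasing on the positive axis with its unique positive critical point at $kq/(k+1)$, and combine this shape with the known values $g(1)=g(\alpha)=0$ and $g(q)>0$ to pin down the signs. Your write-up is somewhat more explicit than the paper's (which simply asserts $g(kq/(k+1))<0$ and says ``the lemma follows''), but there is no substantive difference in the argument.
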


\begin{proof}
Observe that
\begin{equation*}
g'(x)=x^{k-1}((k+1)x-kq),
\end{equation*}
\noindent and so $g'(x)<0$ for $x\in [1,kq/(k+1))$ and $g'(x)>0$ for  $x\in (kq/(k+1)),q]$.
Since $g(1)=g(\alpha)=0$, $g(kq/(k+1))<0$ and $g(q)>0$, the lemma follows.
\end{proof}
\vspace{2mm}
Next, we establish Proposition \ref{prop:fx}.\\
\begin{proof}[Proof of Proposition \ref{prop:fx}]
First, we show that the roots of $g(x)$ in \eqref{eq:gx}, and hence of $f(x)$, are distinct.
Indeed, $g'(x)=0$ if and only if $x=0$ or $x=\frac{kq}{k+1}$.
However, $g(0)\ne 0$ and $g\left(\frac{kq}{k+1}\right)\ne 0$. Therefore, the roots are distinct.

Next, let $\gamma$ be a root of $f(x)$ with $\gamma\ne\alpha$. We prove by contradiction that $|\gamma|<1$.
We consider the two cases $|\gamma|>|\alpha|$ and $|\gamma|<|\alpha|$
separately.
Suppose $|\gamma|>|\alpha|$. Since $\gamma^k=(q-1)\sum_{i=1}^{k-1}
\gamma^i$, we get
\begin{equation*}
|\gamma|^k=|(q-1)\sum_{i=0}^{k-1} \gamma^i|\le (q-1)\sum_{i=0}^{k-1}|\gamma|^i,
\end{equation*}
\noindent and so, $f(|\gamma|)\le 0$, contradicting part (i) of Lemma \ref{lem:fib2}.

Next, suppose $|\gamma|\in [1,|\alpha|]$. Since $\gamma$ is also a root of
$g(x)$, $q\gamma^k=\gamma^{k+1}+(q-1)$. Then
\begin{equation}\label{eqn:gamma}
q|\gamma|^k=|\gamma^{k+1}+(q-1)|\le |\gamma|^{k+1}+(q-1),
\end{equation}
which implies that $g(|\gamma|)\ge 0$. Then by part (ii) of Lemma
\ref{lem:fib2}, $|\gamma|\in\{1,|\alpha|\}$ and equality in
(\ref{eqn:gamma}) holds.
Hence, $\gamma^{k+1}$ and $\gamma^k$ are real, implying that $\gamma$ is
real. Since the roots of $g(x)$ are distinct,
$\gamma\in\{-1,-\alpha\}$. But $g(-1)=(-1)^{k+1} (q+1)-(q-1)\ne 0$ and
$g(-\alpha)=2\alpha^kq+(1+(-1)^k)(q-1)\ne 0$, contradicting the fact that
$\gamma$ is a root of $g(x)$.

Therefore, $\alpha$ is the only root outside the unit circle. By Lemma \ref{lem:fib1}, $\alpha$ is in the required interval.
\end{proof}

\begin{proof}[Proof of Proposition \ref{prop:fib}]
We apply Lemma \ref{thm:lev} with $a_i = q-1,\ i=1,\dots,k$ and with
$h(x)=f(x)$.
Observe that $v_0=v_1=\cdots=v_{k-1}=1$. We obtain,
\begin{align*}
F_{k,q}(n) &= \sum_{j=0}^{k-1} p_{n-j}\\
&= \sum_{j=0}^{k-1} \sum_{i=1}^k \frac{\gamma_i^{k-1+n-j}}{f'(\gamma_i)}\\
&=\sum_{i=1}^k \frac{\gamma_i^{n}}{f'(\gamma_i)}\sum_{j=0}^{k-1}{\gamma_i^j}\\
&=\sum_{i=1}^k \left(\frac{\gamma_i^{n}}{f'(\gamma_i)}\right)\left(\frac{\gamma_i^k}{q-1}\right)
\\
&=\sum_{i=1}^k \left(\frac{\gamma_i^{n}(\gamma_i-1)}{(q+(k+1)(\gamma_i-q))\gamma_i^{k-1}}\right)\left(\frac{\gamma_i^k}{q-1}\right) \\
&=\sum_{i=1}^k \frac{\gamma_i^{n+1}(\gamma_i-1)}{(q-1)(q+(k+1)(\gamma_i-q))}.
\end{align*}
To obtain the fourth step we used the fact that $\gamma_i$'s are roots of
$f(x)$. Without loss of generality,
let $\gamma_1$ be the $\alpha(k,q)$ defined in Proposition \ref{prop:fib} and so
$|\gamma_i|<1$ for $i\ge 2$. We note that $q+(k+1)(\gamma_i - q)
= -kq+(k+1)\gamma_i$. We get the following sequence of inequalities for $q\ge3$:

\begin{align*}
& \left|\sum_{i=2}^k \frac{\gamma_i^{n+1}(\gamma_i-1)}{(q-1)(q+(k+1)(\gamma_i-q))}\right|\\
& \le \sum_{i=2}^k \left|\gamma^{n+1}_i\right|\left|\frac{(\gamma_i-1)}{(q-1)(q+(k+1)(\gamma_i-q))}\right|\\
& \le \sum_{i=2}^k \frac{|\gamma_i-1|}{(q-1)|kq-(k+1)\gamma_i|}\\
& < \sum_{i=2}^k \frac1{q-1} \frac{2}{k(q-1)-1} \\
& < \frac12,
\end{align*}
where the second last step is obtained by observing that for $i\ge2$ we have the inequalities $|\gamma_i-1|<2$ and $|kq-(k+1)\gamma_i| > kq - (k+1)$. The last step is obtained by applying the inequality $q\ge3.$
This completes the proof for $q\ge3$. The proof for $q=2$ is present in
Dresden\cite{dre11}.
\end{proof}

\vspace{2mm}

Finally, we prove Lemma \ref{lem:beta}. Again, for brevity, we denote
$\alpha(k,q)$ and $\beta(k,q)$ by $\alpha$ and $\beta$ respectively.\\
\begin{proof}[Proof of Lemma \ref{lem:beta}]
Observe that $g(q-1/q^{k-1})<0$ if and only if
\begin{equation*}
(1-1/q^k)^k>(1-1/q),
\end{equation*}
where $g(x)$ is the polynomial defined in \eqref{eq:gx}.
Since  $(1-1/q^k)^k \to 1$ as $k\to \infty$, there exists a constant $K_q$ such that $g(q-1/q^{k-1})<0$ for all $k\ge K_q$.

Hence, for all $k\ge K_q$, there exists $\beta$ in the interval $(q-1/q^{k-1},q)$ such that $g(\beta)<0$.
We claim that $\beta$ satisfies (\ref{eqn:beta}) by showing that $g(\beta)<0$ implies that $g(q- 1/{\beta^k})<0$.
Indeed, since
$$
                        g(\beta) =\beta^k(\beta-q)+(q-1)  < 0,
$$
we get
\begin{align*}
&&\beta                                                         & < q-\frac {q-1}{\beta^k}\\
\Rightarrow && g\left(q-\frac {q-1}{\beta^k}\right)&=\left(q-\frac
    {q-1}{\beta^k}\right)^k\left(-\frac{q-1}{\beta^k}\right)+(q-1)\\
            && &<0.
\end{align*}
Since $g(\alpha)=0$, we get
$
q-\frac {q-1}{\beta^k}<\alpha<q.
$
\end{proof}

\section*{Acknowledgement}
We are grateful to  Dragana Bajic for providing us with a copy of her work
\cite{baj07}, {and the anonymous reviewers for their helpful comments.
In particular, the comparison with distributed sequences in Subsection \ref{subsec:comparison}
was prompted by a perspicacious reviewer.}

%%%%%%%%%%%%%%%%%%%%%%%%%%%%%%%%%%%%%%%%%%%%%%%%%%%%%%%%%%%%%%%%%%%%

%\section{put name of next section here}\label{put label here}

%%%%%%%%%%%%%%%%%%%%%%%%%%%%%%%%%%%%%%%%%%%%%%%%%%%%%%%%%%%%%%%%%%

%\bibliographystyle{plain}
%\bibliography{mybibliography}

\begin{thebibliography}{9}

\bibitem{baj03}
D.~Bajic, J.~Stojanovic, and J.~Lindner, ``Multiple Window-sliding
Search,'' \emph{Proceedings of 2003 IEEE International Symposium on
Information Theory} (ISIT-2003), Yokohama, Japan, June 2003,
pp.~249.

\bibitem{baj04}
D.~Bajic, and J.~Stojanovic, ``Distributed sequences and search process,''
\emph{Proc. IEEE ICC. 2004}, Paris, June 2004, pp.~514--518.

\bibitem{baj07}
D.~Bajic, ``On Construction of Cross-Bifix-Free Kernel Sets,'' \emph{2nd MCM
COST 2100,} TD(07)237, Lisbon, Portugal, February 2007.

\bibitem{bil12}
S.~Bilotta, E.~Pergola, and R.~Pinzani, ``A new approach to cross-bifix-free sets,'' \emph{IEEE Trans. Inform.
    Theory,}''  vol.~58, no.~6, pp.~4058--4063, June 2012.

\bibitem{dre11}
G.~P.~B.~Dresden, ``A Simplified Binet Formula for $k$-Generalized
Fibonacci Numbers,'' {arXiv:0905.0304v2}, 2011.


\bibitem{maxcliquedyn}
J.~Konc, and  D.~Jane\v zi\v c, ``An improved branch and bound algorithm
for the maximum clique problem,'' \emph{MATCH Commun. Math. Comput. Chem.}, 2007,
vol.~58, pp.~569--590, \url{http://www.sicmm.org/~konc/maxclique}.



\bibitem{lev85}
    C.~Levesque, ``On $m$-th order linear recurrences,'' \emph{Fibonacci
    Quart.}  vol.~23, no.~4, 1985, pp.~290--293.

%\bibitem{lov79}
%L.~Lovasz, ``On the Shannon capacity of a graph,'' \emph{IEEE Trans.
%Inform. Theory,} vol.~IT-25, no.~1, pp.~1--7, January 1979.

\bibitem{mas72}
J.~L.~Massey, ``Optimum Frame Synchronization,'' \emph{IEEE Trans.
Commun.}, vol.~20, no.~2, pp.~115--119, April 1972.

\bibitem{mil60}
E.~P.~Miles, ``Generalized Fibonacci Numbers and Associated Matrices'',
\emph{The American Mathematical Monthly}, vol.~67, no.~8, October 1960,
pp.~745--752.

\bibitem{mil71}
M. D. Miller,
``On Generalized Fibonacci Numbers,'' \emph{The American Mathematical
Monthly,} vol.~78, no.~10, December 1971, pp.~1108--1109.


\bibitem{nie73}
P.~T.~Nielsen, ``On the Expected Duration of a Search for a
Fixed Pattern in Random Data,'' \emph{IEEE Trans.
Inform. Theory}, vol.~19, pp.~702--704, September 1973.


\bibitem{sch08}
M.~Schork, ``The $r$-Generalized Fibonacci Numbers
and Polynomial Coefficients,''
\emph{Int. J. Contemp. Math. Sciences,} vol.~3, 2008, no.~24,
pp.~1157--1163.

\bibitem{ste12}
C.~Stefanovic, and D. Bajic, ``On the search for a sequence from a predefined
set of sequences in random and framed data streams,'' \emph{IEEE Trans.
Commun.,} vol.~60, no.~1, pp.~189--197, January 2012.

\bibitem{wij00}
A.~J.~de~Lind van Wijngaarden, and T.~J.~Willink, ``Frame Synchronization Using Distributed
Sequences,'' \emph{IEEE Trans. Commun.,} vol.~48, no.~12, pp.~2127--2138,
December 2000.

%\bibitem{knuth94} D. E. Knuth, \emph{The Sandwich Theorem}, Electronic J. Combinatorics 1, No. 1 (1994), A1, 1-48.

%\bibitem{schrijver79} A. Schrijver, \emph{A comparison of the Delsarte and \Lovasz bound}, IEEE Trans. Inform. Theory IT-25 (1979), 425-429.

\end{thebibliography}

\begin{IEEEbiographynophoto}{Yeow Meng Chee}
(SM~'08) received the B.Math. degree in computer science and combinatorics and
optimization and the M.Math. and Ph.D. degrees in computer science, from the University of Waterloo, Waterloo, ON, Canada, in 1988, 1989,
and 1996, respectively.

Currently, he is an Associate Professor at the Division of Mathematical
Sciences, School of Physical and Mathematical Sciences, Nanyang
Technological University, Singapore. Prior to this, he was Program Director
of Interactive Digital Media R\&D in the Media Development Authority of
Singapore, Postdoctoral Fellow at the University of Waterloo and IBM's
Z{\"u}rich Research Laboratory, General Manager of the Singapore Computer
Emergency Response Team, and Deputy Director of Strategic Programs at the
Infocomm Development Authority, Singapore. His research interest lies in
the interplay between combinatorics and computer science/engineering,
particularly combinatorial design theory, coding theory, extremal set
systems, and electronic design automation.
\end{IEEEbiographynophoto}
\begin{IEEEbiographynophoto}{Han Mao Kiah}
received the B.Sc.(Hon) degree in mathematics from the National
University of Singapore, Singapore in 2006.  Currently, he is working
towards his Ph.D. degree at the Division of Mathematical Sciences, School of
Physical and Mathematical Sciences,  Nanyang Technological University, Singapore.

His research interest lies in the application of combinatorics to
engineering problems in information theory. In particular, his interests
include combinatorial design theory, coding theory and power line
communications.
\end{IEEEbiographynophoto}
\begin{IEEEbiographynophoto}{Punarbasu Purkayastha}
received the B.Tech. degree in electrical engineering from Indian
Institute of Technology, Kanpur, India in 2004, and the Ph.D. degree in
electrical engineering from University of Maryland, College Park, U.S.A.,
in 2010.

Currently, he is a Research Fellow at the Division of Mathematical
Sciences, School of Physical and Mathematical Sciences,  Nanyang
Technological University, Singapore. His research interests include coding
theory, combinatorics, information theory and communication theory.
\end{IEEEbiographynophoto}
\begin{IEEEbiographynophoto}{Chengmin Wang}
received the B.Math. and Ph.D. degrees in mathematics from Suzhou
University, China in 2002 and 2007, respectively.

Currently, he is an Associate Professor at the School of Science,
Jiangnan University, China. Prior to this, he was a Visiting Scholar
 at the School of Computing, Informatics and Decision
Systems Engineering, Arizona State University, USA, from 2010 to
2011 and was a Research Fellow at the Division of Mathematical
Sciences, School of Physical and Mathematical Sciences, Nanyang
Technological University, Singapore from 2011 to 2012. His research
interests include combinatorial design theory and its applications
in coding theory and cryptography.
\end{IEEEbiographynophoto}
\vfill

\end{document}